\newtheorem{lemma}{Lemma}[section]
\newtheorem{theorem}{Theorem}[section]
\renewcommand{\P}{\mathbb{P}}
\newcommand{\R}{\mathbb{R}}
\newcommand{\E}{\mathbb{E}}
\newcommand{\N}{\mathbb{N}}
\newcommand{\CA}{\mathcal{A}}
\newcommand{\tb}[1]{{\mathbf{#1}}}
\newcommand{\eps}{\varepsilon}
\numberwithin{equation}{section}
\begin{document}

\title[Asymmetrical mating preferences]{A stochastic model for reproductive isolation under asymmetrical mating preferences}

\author{H\'el\`ene Leman}
\address{CIMAT, De Jalisco S-N, Valenciana, 36240 Guanajuato, Gto., Mexico}
\email{helene.leman@cimat.mx}

\maketitle

\begin{abstract}
More and more evidence shows that mating preference is a mechanism that may lead to a reproductive isolation event. In this paper, a haploid population living on two patches linked by migration is considered. Individuals are ecologically and demographically neutral on the space and differ only on a trait, $a$ or $A$, affecting both mating success and migration rate. The special feature of this paper is to assume that the strengths of the mating preference and the migration depend on the trait carried. Indeed, patterns of mating preferences are generally asymmetrical between the subspecies of a population.
I prove that mating preference interacting with frequency-dependent migration behavior can lead to a reproductive isolation. Then, I describe the time before reproductive isolation occurs. To reach this result, I use an original method to study the limiting dynamical system, analyzing first the system without migration and adding migration with a perturbation method. Finally, I study how the time before reproductive isolation is influenced by the parameters of migration and of mating preferences, highlighting that large migration rates tend to favor types with weak mating preferences. 
\end{abstract}

\medskip \noindent\emph{Keywords:} mating preference, asymmetrical preference, birth-death stochastic model, dynamical system, long-time behaviour, perturbation method.

\medskip
\noindent\emph{AMS subject classification:} 92D40, 37N25, 60J27.

\maketitle

\section{Introduction}

Understanding the mechanisms of speciation and reproductive isolation is a major issue in evolutionary biology. 
There are now strong evidence that sexual preferences and speciation are tied~\cite{lande1981models,boughman2001divergent}. The role of 'magic' or 'multiple effect' trait, which associates both adaptation to an ecological niche and a mate preference, has first been studied deeply. It has been shown that such traits may lead to speciation using direct experimental evidence \cite{merrill2012disruptive} or theoretical works \cite{lande1988ecological,van1998sympatric}. 
Then, studies focused on the particular role of mating preference during a speciation event \cite{gavrilets2014review}, highlighting that (i) it may impede reproductive isolation~\cite{Servedio2010,servedio2014counterintuitive,servedio2015effects}, or, (ii) it may promote reproductive isolation. This promoting role may be secondary or primary. For example, the initial divergence in traits may be the result of natural selection in order to decrease hybridization and then be subjected to mating preference~\cite{panhuis2001sexual}, producing speciation by reinforcement~\cite{gregorius1989characterization}. Other studies illustrate the direct and promoting role of assortative mating,
using numerical simulations \cite{kondrashov1998origin,m2012sexual,smadi2017assortative}, or theoretical results \cite{rudnicki2015model,coron2016stochastic}.

The studies mentioned above focus on a symmetrical sexual preference, assuming that all individuals express the same sexual preference. Numerous observations and studies though do not support this assumption and describe examples of species that express different patterns of preference (See~\cite{panhuis2001sexual} for examples).~\cite{smadja2005asymmetrical} describe such an example between two subspecies of the house mouse. The subspecies \textit{Mus musculus musculus} is characterized by a stronger assortative preference than the subspecies \textit{Mus musculus domesticus} \cite{smadja_catalan_2004}. 
A mechanism for subspecies recognition mediated by urinary signals occurs between these two taxa and seems to maintain a reproductive isolation. Another example comes from \textit{Drosophila melanogaster} populations where a strong sexual isolation with an asymmetrical pattern of sexual preference was observed \cite{wu1995sexual,hollocher1997incipient}. The Zimbabwe female lines of \textit{Drosophila melanogaster} have a nearly exclusive preference for males from the same locality over the males from other regions or continents; the reciprocal mating is also reduced but to a lesser degree.

Hence, in this paper, I focused on the cases where mating preference promotes sexual selection, and I was interested in two main problematic: (i) studying the influence of an asymmetric mating preference pattern on speciation mechanisms, and (ii) understanding the effects of migration on mating preference advantages.
To do so, I aimed to generalize the model of~\cite{coron2016stochastic} to account for asymmetrical sexual preferences. A haploid population divided in two demes but connected by migration is considered. Following the seminal papers \cite{bolker1997using,dieckmann2000relaxation,fournier2004microscopic}, I
used a stochastic individual-based model with competition and varying population size. Individuals are assumed not to express any local adaptation. Their parameters do not depend on their location. Individuals, however, are characterized by a mating trait, encoded by a bi-allelic locus, and which has two consequences: (i) individuals of the same type have a higher probability to mate and give an offspring, and (ii) the migration rate of an individual increases with the proportion of individuals caring the other trait in its deme. Finally, the two alleles may not have identical effects, in the sense that strengths of mating preferences and of migration depend on the allele carried by the individual.

Using convergence to the large population limit, I first connected the microscopic model to a macroscopic and deterministic model. Then, studying  
both models, I established the main result of the paper, which ensures that the mechanism of mating preference combined with a negative type-dependent migration is sufficient to entail reproductive isolation and which gives the time needed before reproductive isolation. Here, unlike \cite{coron2016stochastic}, the time is written with both mating preference parameters and both migration parameters, related to both alleles. I finally conducted an extensive study on the influence of migration and preference parameters on this time showing that large migration rates can favor types with a weak mating preference. 
The proof of the main result is based on a fine analysis of the deterministic limiting model. In particular, global results on the dynamical system are established such that the dynamics of almost all trajectories can be predicted. To do so, I developed an original method based on a perturbation theory of the migration parameters, which fully differs from the method used by~\cite{coron2016stochastic}. The asset of this method is that it can easily be adapted to other dynamical systems.

The paper is organized as follows. In Section~\ref{sec_model}, the stochastic model is introduced and motivated from a biological perspective. Section~\ref{sec_result} presents the results of the paper. In particular, the main results on the deterministic limiting model and on the stochastic process are stated in Section~\ref{subsec_time}. Section~\ref{subsec_withoutmig} presents the main result in the case without migration between both patches. In Section~\ref{subsec_num}, the influence of migration on the time before reproductive isolation is analyzed. Finally, Section~\ref{subsec_perturbation} establishes the proof of the key result using perturbation theory. Proofs of the case without migration will be found in Appendix~\ref{subsec_sansp}. Proofs of probabilistic parts of the main result will be found in Appendix~\ref{subsec_probaasym}.

\section{Model}
\label{sec_model}
The population is divided into two patches. The individuals are haploid and characterized by a diallelic locus ($a$ or $A$) and a position ($1$ or $2$ depending on the patch in which they are). The set  $\mathcal{E}:=\{(\alpha,i), \alpha \in \{a,A\}, i\in\{1,2\}\}$ is used to characterize the individuals. The population dynamics follows a multi-type birth and death process with competition in continuous time.  In other words, the dynamics follows a Markov jump process in space $\N^\mathcal{E}$, where the rates are described below.
At any time $t$, the population is represented by the following vector of dimension $4$ in $\N^\mathcal{E}$ : 
\begin{equation*}
 {\bf{N}}^K(t)=(N^K_{A,1}(t),N^K_{a,1}(t),N^K_{A,2}(t),N^K_{a,2}(t)) \in \N^\mathcal{E}
\end{equation*}
where $N^K_{\alpha,i}(t)$ denotes the number of individuals with genotype $\alpha$ in the deme $i$ at time $t$.
$K$ is an integer parameter associating to the concept of carrying capacity and accounting for the quantity of available resources or space (see also \cite{coron2016stochastic} for more details). Consequently, it is a scaling parameter for the size of the community. It is assumed to give the order of magnitude of the initial population, in the sense that the initial number of individuals divided by $K$ converges (in probability) when $K$ goes to infinity. The competition for resources is also scaled with $1/K$, as presented below.

 In what follows, if $\alpha$ denotes one of the alleles, notation $\bar \alpha$ denotes the other allele and if $i$ denotes one of the demes, $\bar i$ denotes the other one. \\

The birth, death and migration rates of each individual are now described.

At a rate $B>0$, a given individual with trait $\alpha\in \{a,A\}$ encounters uniformly at random another individual of its deme. Then it mates with the latter and transmits its trait with probability $b\beta_\alpha /B \leq 1$ if the other individual carries also the trait $\alpha$, and with probability $b/B\leq 1$ if the other individual carries the trait $\bar\alpha$. That is to say, after encountering, two individuals that carry the same trait $\alpha$ have a probability $\beta_\alpha$-times larger to mate and give birth to a viable offspring than two mating individuals with different traits. Hence, the current state of the population is denoted by ${\bf{N}}^K\in \N^{\mathcal{E}}$, the total birth rate of $\alpha$-individuals in patch $i$ is 
\begin{equation}
\label{eq_birth}
 bN^K_{\alpha,i} \frac{\beta_\alpha N^K_{\alpha,i}+N^K_{\bar\alpha,i}}{N^K_{\alpha,i}+N^K_{\bar\alpha,i}}.
\end{equation} 
Note that two parameters, $\beta_a$ and $\beta_A$, are used to model the sexual preference depending on the trait carried by the individual. The limiting case where $\beta_A=\beta_a$ was studied by~\cite{coron2016stochastic}. Here, I was interested in the case where $\beta_a \neq \beta_A$ although the result of the limiting case can be rediscovered with our calculation.
As presented in \cite{coron2016stochastic}, Formula~\eqref{eq_birth} models an assortative mating by phenotypic matching or recognition alleles \cite{blaustein1983kin,JonesRatterman2009}. 
Note that, in the present model, preference modifies the rate of mating and not only the distribution of genotypes, unlike what is usually assumed in classical generational models  \cite{odonald1960assortive,lande1981models,Kirkpatrick1982,gavrilets2004fitness,BurgerSchneider2006,Servedio2010}.
The present model can be compared with these classical ones by computing the probabilities that an individual of trait $\alpha$ in the deme $i$ gives birth after encountering an individual of the same trait (resp. of the opposite trait) conditionally on the fact that this individual gives birth at time $t$, and we find
\begin{equation*}
 \frac{\beta_\alpha N^K_{\alpha,i}}{\beta_\alpha N^K_{\alpha,i}+N^K_{\bar\alpha,i}} \qquad \left(\text{resp.} \quad \frac{ N^K_{\bar\alpha,i}}{\beta_\alpha N^K_{\alpha,i}+N^K_{\bar\alpha,i}} \right).
\end{equation*}
Note that these terms correspond to the ones presented in the supplementary material of~\cite{Servedio2010}, or in~\cite{gavrilets1998evolution}. A extended discussion between these two types of models can be found in Section 2 of \cite{coron2016stochastic}.

The death rate of a given individual is composed of a natural death rate and a competition death rate. Individuals compete for resources or space against all individuals of its own deme. The competitive death rate of each individual is thus proportional to the total population size of its deme. Finally, the total death rate of $\alpha$-individuals in patch $i$ is 
\begin{equation}
 \left(d+\frac{c}{K}\left(N^K_{\alpha,i}+N^K_{\bar\alpha,i}\right)\right)N^K_{\alpha,i},
\end{equation}
where $d$ models the natural death and $c$ models the competition for resources. As presented previously, $K$ is the scaling parameter that scales the amount for resources. Hence, the larger $K$ is, the smaller the strength of competition between two individuals, $c/K$, is.

Finally, individuals can migrate from one patch to the other one. Following \cite{PayneKrakauer1997,coron2016stochastic,smadi2017assortative}, I use a density-dependent migration rate in such a way that individuals are more prone to move if they do not find a suitable mate. This hypothesis is relevant for all organisms with
active mate searching \cite{uy2001complex,jugovic2017movement}. The migration term of an individual is proportional to the proportion of individuals carrying the other allele in its deme, and to a parameter $m_\alpha$ which depends on the trait of the individual. Hence, the alleles code for the strength of the mating preference and simultaneously, the speed of migration. The total migration rate of $\alpha$-individuals from patch $1$ to patch $2$ finally is 
\begin{equation}
m_\alpha \left( \frac{N^K_{\bar \alpha,1}}{N^K_{\alpha,1}+N^K_{\bar\alpha,1}} \right) N^K_{\alpha,1}.
\end{equation} 
Note that the migration rate does not depend on the other deme composition.\\

In what follows, the following statements on the parameters are assumed:
$$
\beta_A>1, \;\; \beta_a>1, \;\; b>d>0, \;\; c>0, \;\; m_A\geq 0, \;\; m_a\geq 0.
$$

\section{Results}
\label{sec_result}

\subsection{Time needed before reproductive isolation}
\label{subsec_time}
In this section, I present the main result of the paper that gives the time needed for the process ${\bf{N}^K}$ to reach reproductive isolation. This time is given with respect to $K$, the carrying capacity of the process.\\

To this aim, let us first give the process average behavior using convergence to the large limit population. Precisely, Lemma~\ref{lemma_1} below ensures that the sequence of re-scaled processes
$$
({\bf{Z}}^K(t))_{t\geq 0} = \left( \frac{{\bf{N}}^K(t)}{K} \right)_{t\geq 0}
$$
converges when $K$ goes to infinity to 
\begin{equation}\label{systdetasym}
\left\{\begin{aligned} \frac{d}{dt}z_{A,1}(t)&=
z_{A,1}\Bigl[ b\frac{\beta_A z_{A,1}+z_{a,1}}{z_{A,1}+z_{a,1}}-d-c(z_{A,1}+z_{a,1})-m_A\frac{z_{a,1}}{z_{A,1}+z_{a,1}}\Bigr]+m_A\frac{z_{A,2}z_{a,2}}{z_{A,2}+z_{a,2}}\\
\frac{d}{dt}z_{a,1}(t)&=
z_{a,1}\Bigl[ b\frac{\beta_a z_{a,1}+z_{A,1}}{z_{A,1}+z_{a,1}}-d-c(z_{A,1}+z_{a,1})-m_a\frac{z_{A,1}}{z_{A,1}+z_{a,1}}\Bigr]+m_a\frac{z_{A,2}z_{a,2}}{z_{A,2}+z_{a,2}}\\
\frac{d}{dt}z_{A,2}(t)&=
z_{A,2}\Bigl[ b\frac{\beta_A z_{A,2}+z_{a,2}}{z_{A,2}+z_{a,2}}-d-c(z_{A,2}+z_{a,2})-m_A\frac{z_{a,2}}{z_{A,2}+z_{a,2}}\Bigr]+m_A\frac{z_{A,1}z_{a,1}}{z_{A,1}+z_{a,1}}\\
\frac{d}{dt}z_{a,2}(t)&=
z_{a,2}\Bigl[ b\frac{\beta_a z_{a,2}+z_{A,2}}{z_{A,2}+z_{a,2}}-d-c(z_{A,2}+z_{a,2})-m_a\frac{z_{A,2}}{z_{A,2}+z_{a,2}}\Bigr]+m_a\frac{z_{A,1}z_{a,1}}{z_{A,1}+z_{a,1}}.\end{aligned}\right.
\end{equation}
\begin{lemma}
\label{lemma_1}
Assume that the sequence $({\bf{Z}}^K(0))_{ K\geq 0}$ converges in probability to the deterministic vector ${\bf{z}}_0 \in \R^{\mathcal{E}}$. Then, for any $T\geq 0$, 
\begin{equation}\label{eq_conv}
\underset{K \to \infty}{\lim}\  \sup_{s\leq T}\ \|{\bf{Z}}^K(s)-{\bf{z}}^{(\bf{z}_0)}(s)  \|=0 \quad \text{in probability},
\end{equation}
where $\|. \|$ denotes the $L^\infty$-Norm on $\R^\mathcal{E}$ and $({\bf{z}}^{(\bf{z}_0)}(t) )_{t\geq 0}$ denotes the solution of~\eqref{systdetasym} with initial condition ${\bf{z}}_0\in \R^\mathcal{E}$
\end{lemma}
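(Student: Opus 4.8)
The plan is to obtain Lemma~\ref{lemma_1} as an instance of the classical functional law of large numbers for density-dependent Markov jump processes, following the pathwise approach of~\cite{fournier2004microscopic}. The first step is to write down the infinitesimal generator $\mathcal{L}^K$ of $\mathbf{N}^K$: for a test function it is the sum, over the four types $(\alpha,i)\in\mathcal{E}$, of a birth term (increment $+\mathbf{e}_{(\alpha,i)}$ at the rate~\eqref{eq_birth}), a death term (increment $-\mathbf{e}_{(\alpha,i)}$), and the two migration terms (increments $-\mathbf{e}_{(\alpha,1)}+\mathbf{e}_{(\alpha,2)}$ and its mirror image). Applying $\mathcal{L}^K$ to the coordinate projections and rescaling, one checks directly that substituting $\mathbf{N}^K=K\mathbf{Z}^K$ and dividing by $K$ turns the sum of rates times increments into exactly $F(\mathbf{Z}^K)$, where $F\colon\R_+^{\mathcal{E}}\to\R^{\mathcal{E}}$ is the vector field on the right-hand side of~\eqref{systdetasym}; in particular the explicit factor $1/K$ in the competition death rate is what produces the quadratic term $c(z_{\alpha,i}+z_{\bar\alpha,i})z_{\alpha,i}$.

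Next I would represent $\mathbf{N}^K$ through a finite family of independent Poisson point measures on $\R_+\times\R_+$, one per reaction, which yields the semimartingale decomposition
\begin{equation*}
\mathbf{Z}^K(t)=\mathbf{Z}^K(0)+\int_0^t F(\mathbf{Z}^K(s))\,ds+\mathbf{M}^K(t),
\end{equation*}
with $\mathbf{M}^K$ a martingale. Since the jumps of $\mathbf{Z}^K$ have size $1/K$ while the total reaction rate is of order $K$, the predictable bracket of $\mathbf{M}^K$ is bounded by $\tfrac{1}{K}\int_0^t G(\mathbf{Z}^K(s))\,ds$ for some polynomial $G$, so by Doob's inequality $\E[\sup_{s\le T}\|\mathbf{M}^K(s)\|^2]=O(1/K)$ as soon as the trajectory stays in a bounded region. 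To guarantee this confinement I would establish a moment bound $\sup_{K}\E[\sup_{s\le T}\|\mathbf{Z}^K(s)\|]<\infty$. Writing $S^K=\sum_{(\alpha,i)\in\mathcal{E}}N^K_{\alpha,i}$ for the total size, the total birth rate is at most $b\max(\beta_a,\beta_A)\,S^K$ and migration leaves $S^K$ unchanged, so $\mathcal{L}^K S^K\le b\max(\beta_a,\beta_A)\,S^K$; a Gronwall estimate on the expectation together with Doob's inequality then bounds $\E[\sup_{s\le T}S^K(s)/K]$ uniformly in $K$ in terms of $T$ and $\|\mathbf{z}_0\|$. The super-linear competition death rate would in fact give a sharper bound, but the linear birth control already suffices on the finite horizon $[0,T]$, and it shows that $\mathbf{Z}^K$ stays, with probability tending to one, in a fixed compact set $\mathcal{K}\subset\R_+^{\mathcal{E}}$ depending only on $T$ and $\|\mathbf{z}_0\|$.

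The delicate point, and the one deserving the most care, is the regularity of $F$, needed both for a Gronwall comparison and for the well-posedness of~\eqref{systdetasym}. The frequency ratios $z_{a,1}/(z_{A,1}+z_{a,1})$, $z_{A,2}z_{a,2}/(z_{A,2}+z_{a,2})$, and so on, are singular on the set where a deme is empty, i.e.\ where $z_{A,i}=z_{a,i}=0$. The resolution is to notice that in every coordinate of $F$ each singular ratio is multiplied by a factor vanishing there: the bracketed expressions carry a factor $z_{\alpha,i}$, turning for instance the migration loss into $m_\alpha z_{\alpha,i}z_{\bar\alpha,i}/(z_{\alpha,i}+z_{\bar\alpha,i})$, while the migration source term $m_\alpha z_{A,\bar i}z_{a,\bar i}/(z_{A,\bar i}+z_{a,\bar i})$ is already of this harmonic-mean form. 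A direct computation shows that the functions $(x,y)\mapsto xy/(x+y)$ and $(x,y)\mapsto x(\beta x+y)/(x+y)$ extend continuously to the boundary (they are dominated by $\min(x,y)$ and by $\beta x$ respectively) and have bounded gradient on the open quadrant, hence are globally Lipschitz on $\R_+^2$; the remaining competition terms are polynomial, thus locally Lipschitz. Consequently $F$ is locally Lipschitz on $\R_+^{\mathcal{E}}$, which yields existence and uniqueness of $\mathbf{z}^{(\mathbf{z}_0)}$ (the compact a priori bound preventing explosion) together with a Lipschitz constant $L$ for $F$ on $\mathcal{K}$.

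It then remains to close the estimate. Working up to the exit time of $\mathcal{K}$, which exceeds $T$ with probability tending to one, and setting $\Delta^K(t)=\sup_{s\le t}\|\mathbf{Z}^K(s)-\mathbf{z}^{(\mathbf{z}_0)}(s)\|$, subtracting the two integral equations and using the Lipschitz bound gives
\begin{equation*}
\Delta^K(t)\le \|\mathbf{Z}^K(0)-\mathbf{z}_0\|+L\int_0^t\Delta^K(s)\,ds+\sup_{s\le T}\|\mathbf{M}^K(s)\|.
\end{equation*}
Gronwall's lemma bounds $\Delta^K(T)$ by $\bigl(\|\mathbf{Z}^K(0)-\mathbf{z}_0\|+\sup_{s\le T}\|\mathbf{M}^K(s)\|\bigr)e^{LT}$, and letting $K\to\infty$ — the first term vanishing in probability by the hypothesis on $\mathbf{Z}^K(0)$ and the second by the $O(1/K)$ martingale bound — yields~\eqref{eq_conv} and completes the proof. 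The one genuinely non-routine ingredient, which I expect to be the main obstacle, is thus the Lipschitz control of $F$ up to the empty-deme boundary; everything else is the standard confinement-plus-Gronwall machinery.
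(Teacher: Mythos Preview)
Your argument is correct. The paper does not give a proof at all: it simply states that the result follows from a direct application of Theorem~2.1, p.~456 of Ethier--Kurtz (the law of large numbers for density-dependent Markov jump processes). What you have written is, in effect, a self-contained reconstruction of that theorem in this particular setting, via the Poisson-measure representation and a Gronwall comparison in the style of \cite{fournier2004microscopic}. The two routes are therefore not really different in substance---your confinement-plus-Gronwall machinery is exactly what underlies the cited theorem---but your version is explicit where the paper is not. In particular, your careful verification that the frequency-dependent terms $xy/(x+y)$ and $x(\beta x+y)/(x+y)$ extend to globally Lipschitz functions on $\R_+^2$ is a genuine contribution: the Ethier--Kurtz theorem requires local Lipschitz continuity of $F$, and the paper never checks this hypothesis near the empty-deme boundary, whereas you do.
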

This result can be deduced from a direct application of Theorem 2.1 p. 456 by~\cite{EK}.

A direct computation implies that the following four points are stable equilibria of the system~\eqref{systdetasym}:
\begin{itemize}
\item equilibria with fixation of an allele (where only an allele is maintained in both patches)
\begin{equation}
 \label{eq_stableasym1}
(\zeta_A,0,\zeta_A,0),\;   (0,\zeta_a,0,\zeta_a),
\end{equation}
\item equilibria with maintenance of each allele in a different patch
\begin{equation}
 \label{eq_stableasym2}
(\zeta_A,0,0,\zeta_a),\;  (0,\zeta_a,\zeta_A,0),
\end{equation}
\end{itemize}
with $\zeta_\alpha:=\frac{b\beta_\alpha-d}{c}$, $\alpha\in \{A,a\}$.
These four equilibria describe states of reproductive isolation: Once reaching one of these equilibria, migration rates equals zero and no individual travels anymore. 
More specifically, observe that equilibria \eqref{eq_stableasym2} are of particular interest to our problematic. Indeed, once reaching one of these equilibria, even if a small basal migration (i.e. constant migration) is added, the mating preferences and the frequency-dependent migration terms will prevent the populations of both demes to mix again, leading to \textit{migration-selection balance} \cite{karlin1972polymorphisms} but where selection is due to sexual selection and not to natural selection. Precisely, if an $A$-individual travels because of basal migration from patch $1$ to patch $2$, which is filled with $a$-individuals, its probability to reproduce will be significantly reduced in patch $2$ and its migration rate to come back will be so high that it is quite unlikely that its offspring establish in patch $2$. This reasoning, however, fails with equilibria \eqref{eq_stableasym1}.

Our aim is then to understand the long-time behavior of trajectories of the dynamical system and more specifically to detail the set of initial states that lead to one of these equilibria, which corresponds exactly to the basin of attraction of this equilibrium. With this aim in mind, let us define the weighted sums
\begin{eqnarray*}
&\Sigma_i:= (\beta_A-1)z_{A,i}+(\beta_a-1)z_{a,i}, \text{ for } i=1,2,\\
&\Sigma:= \Sigma_1+\Sigma_2 = (\beta_A-1)(z_{A,1}+z_{A,2})+(\beta_a-1)(z_{a,1}+z_{a,2}),
\end{eqnarray*}
and the compact set
\begin{multline}
 \label{def_setS}
\mathcal{S}:=\left\{ {\bf z}\in \R^{\mathcal{E}}, \Sigma_i \geq \frac{(\beta_{\min} -1)(b-d)}{4c}, \text{ for } i=1,2,
 \quad \text{and}\quad
\Sigma \leq \frac{4(\beta_{\max}-1)(b\beta_{\max} -d)}{c}  \right\},
\end{multline}
where $\beta_{\min}=\min(\beta_a,\beta_A)$ and $\beta_{\max}=\max(\beta_a,\beta_A)$.
Next Lemma ensures that we can focus on trajectories starting from $\mathcal{S}$
since any trajectory reaches it in finite time.
\begin{lemma}
\label{lemma_invsetasym}
 Assume that 
\begin{equation}
\label{ass_pasym}
m_A(\beta_A-1)+m_a(\beta_a-1)\leq 2(b-d)(\beta_{\min}-1).
\end{equation}
$\mathcal{S}$ is a positively invariant set for the dynamical system~\eqref{systdetasym}, in the sense that any trajectories starting from this set will stay in it when $t$ grows to $+\infty$. Moreover, any trajectory solution of~\eqref{systdetasym} hits $\mathcal{S}$ after a finite time.
\end{lemma}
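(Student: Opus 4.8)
The plan is to track the weighted sums $\Sigma_1,\Sigma_2$ and their total $\Sigma$ along the flow, showing that the vector field of \eqref{systdetasym} points inward on each face of $\mathcal{S}$. First I would compute $\tfrac{d}{dt}\Sigma_i$ directly. Writing $n_i:=z_{A,i}+z_{a,i}$ and using $\beta_\alpha z_{\alpha,i}=(\beta_\alpha-1)z_{\alpha,i}+z_{\alpha,i}$, the birth contribution collapses to $b\Sigma_i+\tfrac{b}{n_i}\big[(\beta_A-1)^2 z_{A,i}^2+(\beta_a-1)^2 z_{a,i}^2\big]$, the death contribution to $-(d+cn_i)\Sigma_i$, and the two migration terms to $\tfrac{1}{n_{\bar i}}z_{A,\bar i}z_{a,\bar i}\,M-\tfrac{1}{n_i}z_{A,i}z_{a,i}\,M$, where $M:=(\beta_A-1)m_A+(\beta_a-1)m_a$. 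The two structural observations driving everything are that the incoming-migration term is nonnegative, and that summing over $i$ makes the migration terms cancel, so that $\tfrac{d}{dt}\Sigma=(b-d)\Sigma-c(n_1\Sigma_1+n_2\Sigma_2)+b\sum_i\tfrac{1}{n_i}\big[(\beta_A-1)^2 z_{A,i}^2+(\beta_a-1)^2 z_{a,i}^2\big]$ carries no migration at all.

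For the upper face $\Sigma=u:=\tfrac{4(\beta_{\max}-1)(b\beta_{\max}-d)}{c}$, I would bound the quadratic birth term using $(\beta_A-1)^2 z_{A,i}^2+(\beta_a-1)^2 z_{a,i}^2\leq\Sigma_i^2$ (the cross term is nonnegative) together with $\Sigma_i\leq(\beta_{\max}-1)n_i$, which gives $\tfrac{\Sigma_i^2}{n_i}\leq(\beta_{\max}-1)\Sigma_i$ and $n_i\Sigma_i\geq\tfrac{\Sigma_i^2}{\beta_{\max}-1}$. Combined with $\Sigma_1^2+\Sigma_2^2\geq\tfrac12\Sigma^2$, these yield the scalar inequality $\tfrac{d}{dt}\Sigma\leq(b\beta_{\max}-d)\Sigma-\tfrac{c}{2(\beta_{\max}-1)}\Sigma^2$, valid on all of $(0,\infty)^{\mathcal{E}}$. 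Hence $\tfrac{d}{dt}\Sigma<0$ as soon as $\Sigma>u/2$; in particular $\{\Sigma\leq u\}$ is forward invariant, and whenever $\Sigma\geq u$ the right-hand side is bounded above by a negative constant, so $\Sigma$ reaches the level $u$ in finite time.

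For the lower faces $\Sigma_i=\ell:=\tfrac{(\beta_{\min}-1)(b-d)}{4c}$, the crucial point is that a small $\Sigma_i$ forces a small deme: from $\Sigma_i\geq(\beta_{\min}-1)n_i$ one gets $n_i\leq\tfrac{\Sigma_i}{\beta_{\min}-1}$, so on $\{\Sigma_i\leq\ell\}$ one has $cn_i\leq\tfrac{b-d}{4}$ and thus $b-d-cn_i\geq\tfrac34(b-d)>0$; the competition term, far from being harmful, now works in our favour. I would then discard the nonnegative birth and incoming-migration terms and bound the only negative (outgoing-migration) term by $\tfrac{z_{A,i}z_{a,i}}{n_i}\leq\tfrac{n_i}{4}\leq\tfrac{\Sigma_i}{4(\beta_{\min}-1)}$, obtaining $\tfrac{d}{dt}\Sigma_i\geq\tfrac{\Sigma_i}{4(\beta_{\min}-1)}\big[3(b-d)(\beta_{\min}-1)-M\big]$. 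Assumption \eqref{ass_pasym} gives $M\leq2(b-d)(\beta_{\min}-1)$, which makes the bracket at least $(b-d)(\beta_{\min}-1)$, so in fact $\tfrac{d}{dt}\Sigma_i\geq\tfrac{b-d}{4}\Sigma_i$ throughout $\{\Sigma_i\leq\ell\}$. This yields both forward invariance of $\{\Sigma_i\geq\ell\}$ and, by Gr\"onwall, exponential growth of any positive $\Sigma_i$ up to $\ell$ in finite time.

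Finally I would assemble the pieces: each of the three half-spaces $\{\Sigma\leq u\}$, $\{\Sigma_1\geq\ell\}$, $\{\Sigma_2\geq\ell\}$ is forward invariant by the Nagumo-type subtangency conditions just verified, so their intersection $\mathcal{S}$ is forward invariant; and since each condition is attained after a finite time and thereafter preserved, the trajectory lies in $\mathcal{S}$ after the maximum of the three entry times. The main obstacle is the lower-face estimate: one must notice that controlling $\Sigma_i$ from below automatically controls $n_i$ from above, turning the competition death term into an ally, and then squeeze the \emph{asymmetric} outgoing-migration term against it --- this is precisely where the threshold in \eqref{ass_pasym} is consumed. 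The degenerate boundary where some $n_i=0$, on which the vector field is undefined, I would simply exclude by restricting to genuinely biological trajectories with $\Sigma_i(0)>0$, which the inequality $\tfrac{d}{dt}\Sigma_i\geq\tfrac{b-d}{4}\Sigma_i$ keeps positive for all later times.
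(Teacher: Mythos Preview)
Your proof is correct and follows essentially the same route as the paper: both track $\Sigma_1,\Sigma_2,\Sigma$, derive the same scalar upper bound $\tfrac{d}{dt}\Sigma\leq\Sigma\bigl[b\beta_{\max}-d-\tfrac{c}{2(\beta_{\max}-1)}\Sigma\bigr]$, and for the lower face use the same key estimate $\tfrac{z_{A,i}z_{a,i}}{n_i\Sigma_i}\leq\tfrac{1}{4(\beta_{\min}-1)}$ together with assumption~\eqref{ass_pasym} to absorb the outgoing migration. The only cosmetic difference is that the paper keeps the competition term and obtains the global logistic inequality $\tfrac{d}{dt}\Sigma_i\geq\Sigma_i\bigl[\tfrac{b-d}{2}-\tfrac{c}{\beta_{\min}-1}\Sigma_i\bigr]$, whereas you restrict to $\{\Sigma_i\leq\ell\}$ and bound the competition away to get the exponential inequality $\tfrac{d}{dt}\Sigma_i\geq\tfrac{b-d}{4}\Sigma_i$; these are two packagings of the same estimate and lead to the identical conclusion.
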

The aim is thus to study trajectories inside the compact set $\mathcal{S}$.
\begin{theorem}
\label{theo_sysasym}
\begin{enumerate}
\item Assume that $m_A=0$ if and only if $m_a= 0$. There exists $m_0>0$ such that for all $m_A\leq m_0$ and $m_a\leq m_0$, there exist four open subsets $(\mathcal{D}^{\alpha,\alpha'}_{m_A,m_a})_{\alpha,\alpha'\in \CA}$ of $\mathcal{S}$ that are the basins of attraction in $\mathcal{S}$ of the four equilibria~\eqref{eq_stableasym1} and \eqref{eq_stableasym2} of the system~\eqref{systdetasym}, and such that the closure of $\cup_{\alpha,\alpha'\in \CA} \mathcal{D}^{\alpha,\alpha'}_{m_A,m_a}$ is equal to $\mathcal{S}$.\\
 \item In the case $m_A=m_a=0$, the basins of attraction are exactly
 $$
\mathcal{D}^{\alpha,\alpha'}_{0,0} = \left\{ {\bf z}\in \R^{\mathcal{E}}, (\beta_\alpha-1)z_{\alpha,1} > (\beta_{\bar\alpha}-1)z_{\bar\alpha,1} \text{ and } (\beta_{\alpha'}-1)z_{\alpha',2} > (\beta_{\bar\alpha'}-1)z_{\bar\alpha',2}  \right\} \cap \mathcal{S}.
$$
\end{enumerate}
\end{theorem}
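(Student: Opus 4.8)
The plan is to establish part (2) first: it describes the migration-free system, which is the unperturbed object on which the perturbation argument for part (1) rests.

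For part (2) I set $m_A=m_a=0$, so that \eqref{systdetasym} decouples into two independent copies of the planar system for a single patch. Writing $(x,y)=(z_{A,i},z_{a,i})$ for one deme, the dynamics reads
\begin{align*}
\dot x &= x\Bigl[b\tfrac{\beta_A x+y}{x+y}-d-c(x+y)\Bigr],\\
\dot y &= y\Bigl[b\tfrac{\beta_a y+x}{x+y}-d-c(x+y)\Bigr],
\end{align*}
and the four-dimensional basin is the product of the two planar basins intersected with $\mathcal{S}$, so it suffices to analyze the plane. The crucial observation is the logarithmic-ratio identity
\begin{equation*}
\frac{d}{dt}\log\frac{x}{y}=b\,\frac{(\beta_A-1)x-(\beta_a-1)y}{x+y},
\end{equation*}
obtained by subtracting $\dot y/y$ from $\dot x/x$: its sign is exactly that of $(\beta_A-1)x-(\beta_a-1)y$. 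Hence the ray $\{(\beta_A-1)x=(\beta_a-1)y\}$ is invariant (the ratio $x/y$ is frozen on it) and serves as the separatrix, with $x/y$ strictly increasing on the open side $\{(\beta_A-1)x>(\beta_a-1)y\}$ and strictly decreasing on the other.

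To convert this monotonicity into convergence I would argue in the phase plane inside the compact region cut out by $\mathcal{S}$, whose invariance is granted by Lemma~\ref{lemma_invsetasym}. The equilibria are the origin, which is a repeller since $\beta_A,\beta_a>1$ and $b>d$ make both per-capita growth rates positive near $0$; the two fixation points $(\zeta_A,0)$ and $(0,\zeta_a)$, which are stable nodes; and a single coexistence point on the separatrix ray, which is a saddle whose stable manifold is precisely that ray. The strictly monotone ratio excludes periodic orbits, so by Poincar\'e--Bendixson every interior trajectory converges to an equilibrium; a trajectory in $\{(\beta_A-1)x>(\beta_a-1)y\}$ has $x/y$ increasing and therefore cannot accumulate at the origin, at the saddle, or at $(0,\zeta_a)$, forcing convergence to $(\zeta_A,0)$. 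Reinstating patch indices and taking the product over the two demes yields exactly the stated description of $\mathcal{D}^{\alpha,\alpha'}_{0,0}$, with the separatrix rays forming its measure-zero, nowhere-dense boundary.

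For part (1) the starting point is that the four points \eqref{eq_stableasym1}--\eqref{eq_stableasym2} remain equilibria for all $m_A,m_a\geq 0$: at each of them one allele is absent in each patch, so both the frequency-dependent outflow term and the incoming migration term vanish identically. At $m_A=m_a=0$ these are hyperbolic and asymptotically stable, being products of stable planar nodes, so their Jacobians have spectrum in the open left half-plane; by continuity of the spectrum in $(m_A,m_a)$ there is $m_0>0$ (also small enough for \eqref{ass_pasym} to hold) such that for $m_A,m_a\leq m_0$ each stays hyperbolically stable and thus carries an open, pairwise-disjoint basin $\mathcal{D}^{\alpha,\alpha'}_{m_A,m_a}\subset\mathcal{S}$. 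The substantive claim --- and the main obstacle --- is that the closures of these four basins cover all of $\mathcal{S}$, i.e.\ that switching on migration creates no other $\omega$-limit set. This is precisely the step that the perturbation theory of Section~\ref{subsec_perturbation} is designed for: treating the migration terms as a perturbation of the decoupled flow, I would track the deformation of the unperturbed equilibria and of the separatrix (the stable manifolds of the perturbed saddle-type equilibria), and combine the invariance and compactness of $\mathcal{S}$ with a Dulac/Lyapunov-type exclusion of cycles to show that, up to the perturbed separatrix set, every trajectory still lands in one of the four basins. Local persistence of the stable equilibria is routine; the genuine difficulty is the global, uniform-in-$(m_A,m_a)$ control needed to rule out any new attractor inside $\mathcal{S}$ and to keep the perturbed separatrices a negligible boundary, under the non-degeneracy hypothesis $m_A=0\iff m_a=0$.
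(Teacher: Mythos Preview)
Your treatment of part~(2) is correct and essentially equivalent to the paper's. Your monotone-ratio identity $\tfrac{d}{dt}\log(x/y)=b\,[(\beta_A-1)x-(\beta_a-1)y]/(x+y)$ is the same computation that underlies the paper's Lyapunov function $W=\ln(\Sigma/\Omega)$ in Appendix~\ref{subsec_sansp}; the paper then applies LaSalle's invariance principle, whereas you invoke Poincar\'e--Bendixson together with the monotone quantity to exclude cycles. Both routes work, and the planar phase portrait you describe matches Lemma~\ref{lemma_sanspasym} exactly.

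For part~(1), however, your outline has a genuine gap. You correctly observe that the four sinks persist hyperbolically and that the real content is global: ruling out new $\omega$-limit sets in $\mathcal{S}$. But the mechanism you propose --- a ``Dulac/Lyapunov-type exclusion of cycles'' --- is a planar tool and does not transfer to the four-dimensional system~\eqref{systdetasym}; there is no Poincar\'e--Bendixson theory available, and no global Lyapunov function is produced in the paper (nor is one apparent). The paper's argument in Section~\ref{subsec_perturbation} proceeds quite differently. It first enumerates \emph{all nine} hyperbolic equilibria of the $m=0$ system in $\mathcal{S}$ --- the four sinks~\eqref{eqbre_attractive}, four saddles~\eqref{eqbre_saddle3} with three-dimensional stable manifolds, and one saddle~\eqref{eqbre_saddle2} with a two-dimensional stable manifold --- and uses the implicit function theorem together with persistence of hyperbolic structure (Ruelle) to obtain, for each, a neighborhood $\mathcal{V}_i$ in which the local dynamics is controlled for all small $m$. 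The global step is then a \emph{finite-time continuity} argument, not an exclusion-of-cycles argument: since every $m=0$ trajectory in $\mathcal{S}$ reaches some $\mathcal{V}_i$ in bounded time (by part~(2) and compactness of $\mathcal{S}$), uniform continuity of the flow in $(m,\mathbf{z}^0,t)$ forces the same for the perturbed flow. One must then check that a perturbed trajectory entering a saddle neighborhood $\mathcal{V}_i$ ($i\geq 5$) either lies on the (codimension $\geq 1$) perturbed stable manifold or exits along the perturbed unstable manifold into the basin of a sink, with a separate argument to exclude infinite chattering between saddle neighborhoods (this is where the heteroclinic structure $\mathbf{y}_9\to\mathbf{y}_{5,\dots,8}$ enters). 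Your sketch mentions the perturbed separatrices but neither the full list of saddles nor the compactness/finite-time mechanism that replaces the unavailable planar machinery; filling this in is the substance of the proof.
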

Theorem~\ref{theo_sysasym} ensures that any trajectory starting from $\mathcal{S}$, except from an empty interior set, reaches one of the steady states~\eqref{eq_stableasym1} or \eqref{eq_stableasym2}.
In particular, coexistence of both alleles in a single deme does never occur. 
Hence, assortative mating combined with negative type-dependent migration entails reproductive isolation. The assumption on the migration rate is essential to obtain this result. Different results are deduced in models with frequency-independent migration~\cite{servedio2014counterintuitive,smadi2017assortative}. In particular, reproductive isolation may be prevented. 
\cite{smadi2017assortative} study a similar model as the one used here where individuals are diploid. A mechanism of mating preference interacting with frequency-dependent migration is studied. In Section 3.4 of this paper, the frequency-dependent migration term is replaced by a constant migration term. Then, polymorphic equilibria with both alleles in demes can only be observed if the migration rate is sufficiently large. This highlights that, although using other kind of migration prevents reproductive isolation, the mechanism that would prevent reproductive isolation is the migration and not the assortative mating, in their case as in the one presented here.

 Theorem~\ref{theo_sysasym} is, furthermore, a key result to deduce the next theorem, which gives the time before reproductive isolation.
It can be compared to Theorem 2 of \cite{coron2016stochastic} which gives same results in the symmetrical case ($\beta_A=\beta_a$ and $m_A=m_a$). 
In the latter, the equilibrium reached is given by the alleles that make up the majority initially in each patch. In our case, the dynamics is more involved. 
Without migration, the equilibrium reached depends on the initial number of individuals of each type and of the mating preference strengths.
Then, when $m_A$ and $m_a$ are small, the basin of attraction $\mathcal{D}^{\alpha,\alpha'}_{m_A,m_a}$ is a continuous deformation of $\mathcal{D}^{\alpha,\alpha'}_{0,0}$.  I drew such an example in Section~\ref{subsec_num}. Note that no basin of attraction is empty, since the four equilibria are stable equilibria.
 
The asymmetrical sexual preferences make the long-time behavior more involved than in the symmetrical case and proofs here use completely different mathematical techniques. 
I used perturbation theory to deduce Theorem~\ref{theo_sysasym}~: the system is first studied in the particular case where $m_A=m_a=0$, then one makes $m_A$ and $m_a$ grow up to deduce the result for positive migration rates.
Unfortunately, I was not able to give an explicit formulation for the sets $\mathcal{D}^{\alpha,\alpha'}_{m_A,m_a}$ unlike in the symmetrical case.\\

Let us now state the main result. It describes the random time $T^K_{\mathcal{B}_{A,a,\eps}}$  that is the first time when the population process ${\bf{N}}^K$ reaches the set
$$ \mathcal{B}_{A,a,\eps}:= [(\zeta_A-\eps)K,(\zeta_A+\eps)K] \times \{0\} \times \{0\} \times [(\zeta_{a}-\eps)K,(\zeta_{a}+\eps)K], $$
with $\eps>0$ and when $K$ is large. In other words, it is the random time before (1) all $a$-individuals in patch $1$ and all $A$-individuals in patch $2$ get extinct, and (2) the population size in patch $1$ is approximately $K\zeta_A$ and the one in patch $2$ is approximately $K\zeta_a$. In the light of the previous discussion about equilibrium $(\zeta_A,0,0,\zeta_a)$, it thus corresponds to the time before reproductive isolation occurs.
\begin{theorem}
\label{theo_asym}
Assume that Assumptions of Theorem~\ref{theo_sysasym} holds and that $m_A\leq m_0$ and $m_a\leq m_0$.\\
  Let $\varepsilon_0>0$ and assume also that ${\bf{Z}}^K(0)={\bf{N}}^K(0)/K$ converges in probability to a deterministic vector ${\bf{z}^0}\in\mathcal{D}^{A,a}_{m_A,m_a}$ such that $(z^0_{a,1},z^0_{A,2})\neq (0,0)$.
Then there exist $C_0>0$, $M>0$, and $V>0$ depending only on $(M,\eps_0)$ such that, for any $\eps\leq \eps_0$,
 \begin{equation}
  \label{eq_probatheo}
  \lim_{K \to \infty}\P \left(\left\vert \frac{T^K_{\mathcal{B}_{A,a,\eps}}}{\log K}- \frac{1}{\omega(A,a)} \right\vert  \leq  C_0\eps ,  \; 
 \tb{N}^K\left(T^K_{\mathcal{B}_{A,a,\eps}}+t\right)\in \mathcal{B}_{A,a,M\eps}; \; \forall t \leq e^{VK} \right)= 1,
 \end{equation}
 where for all $\alpha, \alpha' \in \CA$,
\begin{equation}
 \label{def_constAa}
\omega(\alpha,\alpha')=\frac{1}{2}\left[ b (\beta_{\alpha}-1+\beta_{\alpha'}-1)+m_\alpha+m_{\alpha'}-\sqrt{\Big(b(\beta_{\alpha}-\beta_{\alpha'})+(m_{\alpha'}-m_\alpha)\Big)^2+4m_{\alpha}m_{\alpha'}} \right].
\end{equation}
Similar results hold for the three other equilibria of~\eqref{eq_stableasym1} and \eqref{eq_stableasym2}.
\end{theorem}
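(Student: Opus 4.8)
The target set $\mathcal{B}_{A,a,\eps}$ is a neighbourhood of the equilibrium $(\zeta_A,0,0,\zeta_a)$, where the coordinates $z_{a,1}$ and $z_{A,2}$ vanish. The plan is to split the evolution of $\mathbf{N}^K$ into three phases: a short \emph{fluid phase} of duration $O(1)$ that carries the process into a small neighbourhood of the equilibrium; a long \emph{decay phase} of duration $\sim\log K$ during which the two dying populations $N^K_{a,1},N^K_{A,2}$ shrink from order $K$ to $0$; and a \emph{confinement phase} during which the process stays trapped near the equilibrium for a time $e^{VK}$. The constant $\omega(A,a)$ should be read as the exponential decay rate of the dying coordinates: linearising~\eqref{systdetasym} around $(\zeta_A,0,0,\zeta_a)$ and keeping only the equations for $(z_{a,1},z_{A,2})$, using $z_{A,1}\simeq\zeta_A$ and $z_{a,2}\simeq\zeta_a$, gives
\begin{equation*}
\frac{d}{dt}\begin{pmatrix} z_{a,1}\\ z_{A,2}\end{pmatrix}
\simeq
\begin{pmatrix} -\big(b(\beta_A-1)+m_a\big) & m_a\\ m_A & -\big(b(\beta_a-1)+m_A\big)\end{pmatrix}
\begin{pmatrix} z_{a,1}\\ z_{A,2}\end{pmatrix},
\end{equation*}
a Metzler matrix whose Perron eigenvalue is exactly $-\omega(A,a)$, with $\omega(A,a)$ given by~\eqref{def_constAa}. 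Since a population of order $K$ decaying at rate $\omega(A,a)$ reaches order $1$ at time $\tfrac{1}{\omega(A,a)}\log K$, this is the expected magnitude of $T^K_{\mathcal{B}_{A,a,\eps}}$.

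\textbf{Fluid phase.} First I would fix a small $\eta=\eta(\eps)$ and invoke Theorem~\ref{theo_sysasym}: since $\mathbf{z}^0\in\mathcal{D}^{A,a}_{m_A,m_a}$, the deterministic solution converges to $(\zeta_A,0,0,\zeta_a)$ and hence enters the $\eta$-ball around it after a deterministic time $T_1$. By the large-population approximation of Lemma~\ref{lemma_1}, with probability tending to $1$ the rescaled process $\mathbf{Z}^K$ is within $2\eta$ of the equilibrium at time $T_1$; thus $N^K_{A,1}\simeq\zeta_A K$, $N^K_{a,2}\simeq\zeta_a K$, while $N^K_{a,1},N^K_{A,2}\le 2\eta K$. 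The assumption $(z^0_{a,1},z^0_{A,2})\neq(0,0)$, combined with a short-time lower bound on the dying populations, guarantees that at least one of them is still of order $K$ when the decay phase begins, so that the full $\log K$ time is genuinely needed.

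\textbf{Decay phase.} This is the core of the argument. The migration inflow driving $N^K_{a,1}$ is $m_a N^K_{a,2}N^K_{A,2}/(N^K_{A,2}+N^K_{a,2})$, which near equilibrium equals $m_a N^K_{A,2}(1+O(\eta))$, and symmetrically for $N^K_{A,2}$; this is the origin of the off-diagonal coupling in the matrix above. As long as the surviving coordinates remain in the $\eta$-neighbourhood, the pair $(N^K_{a,1},N^K_{A,2})$ can be sandwiched, via a coupling, between two two-type linear birth-and-death processes with cross-type immigration whose mean matrices are the displayed matrix perturbed by $O(\eta)$, and hence whose Perron roots are $-\omega(A,a)\pm O(\eta)$. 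Classical extinction-time asymptotics for such subcritical multitype processes started from $\asymp K$ particles then give an absorption time $\tfrac{1}{\omega(A,a)}\log K\,(1+O(\eta))$ with probability tending to $1$. Running this in parallel with a confinement estimate for the surviving coordinates — which, being close to the stable equilibria $\zeta_\alpha K$ of one-type logistic birth-and-death dynamics, stay in the $\eta$-ball for a time far exceeding $\log K$ — closes the decay phase and yields $|T^K_{\mathcal{B}_{A,a,\eps}}/\log K-1/\omega(A,a)|\le C_0\eps$ upon taking $\eta$ proportional to $\eps$.

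\textbf{Confinement phase and the main obstacle.} Once $N^K_{a,1}=N^K_{A,2}=0$, both reintroduction terms vanish, since they carry a factor $N^K_{A,2}$, respectively $N^K_{a,1}$; thus $\{N^K_{a,1}=N^K_{A,2}=0\}$ is absorbing for these coordinates, which is exactly the reproductive isolation. On this set the two surviving coordinates evolve as independent one-type logistic birth-and-death processes, for which the standard large-deviation lower bound on the exit time from a neighbourhood of the carrying capacity provides confinement in $\mathcal{B}_{A,a,M\eps}$ up to a time $e^{VK}$, establishing the second event in~\eqref{eq_probatheo}. The principal difficulty, and what distinguishes the present asymmetric setting from the symmetric case of~\cite{coron2016stochastic}, lies in the decay phase: the dying subsystem is genuinely two-dimensional and coupled through the frequency-dependent migration of the abundant surviving populations, so the decay rate is not a single per-capita rate but the Perron eigenvalue of a $2\times2$ mean matrix — which is precisely why $\omega(A,a)$ carries the square-root form~\eqref{def_constAa}. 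Controlling this coupled subcritical system, confining the surviving coordinates simultaneously, and matching the upper and lower bounds sharply enough to produce the $C_0\eps$ error is where the work concentrates.
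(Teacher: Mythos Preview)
Your proposal is correct and follows essentially the same three-phase decomposition as the paper: a fluid phase handled via Lemma~\ref{lemma_1} and Theorem~\ref{theo_sysasym}, a decay phase in which $(N^K_{a,1},N^K_{A,2})$ is coupled to a two-type subcritical branching process whose mean-matrix top eigenvalue is $-\omega(A,a)$, and a confinement phase using large-deviation estimates for logistic birth--death processes. The paper carries out the decay phase exactly as you outline, invoking Proposition~4.1 of \cite{coron2016stochastic} for the surviving-coordinate confinement and Theorem~3.1 of \cite{heinzmann2009extinction} for the extinction-time asymptotics of the coupled two-type process; your linearised matrix is simply the transpose of the paper's generator $G$ (the mean evolves under $G^{T}$), so the eigenvalue computation is identical.
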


Theorem~\ref{theo_asym} gives the first-order approximation of the time before reproductive isolation. The latter is proportional to $\log(K)$, which is short comparing to $K$, the order of magnitude of the population size. 
Comparatively, the time scale needed for the random genetic drift to entail the end of gene flow between two populations is of order $K$ in a large amount of models.
Hence, Theorem~\ref{theo_asym} implies that reproductive isolation due to mating preference is much shorter.
Note also that Theorem \ref{theo_asym} not only gives the time before reproductive isolation but also it ensures that once the equilibrium is reached, the population sizes of both patches stay around $K\zeta_\alpha$ during at least a long time of order $e^{KV}$.
Secondly, as $\omega(\alpha,\alpha)=b(\beta_{\alpha}-1)$, the time before reaching one of equilibria~\eqref{eq_stableasym1} does not depend on migration parameters unlike the time before reaching one of equilibria~\eqref{eq_stableasym2}. I studied more specifically the influence of migration parameters on this time in Section~\ref{subsec_num}.

The assumption on initial condition ($(z^0_{a,1},z^0_{A,2})\neq (0,0)$) is only needed to obtain the lower bound on the time $T^K_{\mathcal{B}_{A,a,\eps}}$ given in~\eqref{eq_probatheo}. Otherwise, this time would be faster.
Finally, note that, assuming $\beta_A = \beta_a$ and $m_A=m_a$, Theorem 3 of \cite{coron2016stochastic} is rediscovered.

\subsection{Study of the system without migration}
\label{subsec_withoutmig}
The proofs of Theorems \ref{theo_sysasym} and \ref{theo_asym} require a full understanding of the dynamics without migration. Hence before proceeding with the proofs, I present a complete study of the dynamical system when $m_A=m_a=0$. Since both patches evolve independently in this case, only the dynamics of patch 1 is studied and, for the sake of simplicity, the dependency on patches in notation is dropped. From~\eqref{systdetasym}, we find that
\begin{equation}
\label{eq_system1patch}
\left\{
\begin{aligned}
&\frac{d}{dt}z_{A}(t)=
z_{A}\Big[ b\frac{\beta_A z_{A}+z_{a}}{z_{A}+z_{a}}-d-c(z_{A}+z_{a})\Big]\\
&\frac{d}{dt}z_{a}(t)=
z_{a}\Big[ b\frac{\beta_a z_{a}+z_{A}}{z_{A}+z_{a}}-d-c(z_{A}+z_{a})\Big]\\
\end{aligned}
\right.
\end{equation}

The equilibria of the system will be written with the following quantities
\begin{equation*} 
\chi_\alpha:=(\beta_{\bar \alpha}-1)\chi, \quad \text{where} \quad  \chi:=\frac{b(\beta_a-1)(\beta_A-1)+(b-d)(\beta_A-1+\beta_a-1)}{c(\beta_A-1+\beta_a-1)^2},
\end{equation*}
and where $\bar\alpha$ is the complement of $\alpha\in \CA$.
A direct computation implies that there exist exactly four fixed points of the dynamical system~\eqref{eq_system1patch}:
\begin{equation*}
\begin{array}{rl}
&(0,0), \quad (\zeta_A,0), \quad (0,\zeta_a), \quad  \text{and} \quad (\chi_A,\chi_a).\\
\end{array}
\end{equation*}
These equilibria represent respectively the extinction of the population, the loss of allele $a$ or allele $A$, or the long-time coexistence of both alleles.

Let us now describe their stability and the long time behavior of any solution of~\eqref{eq_system1patch}.
\begin{lemma}
\label{lemma_sanspasym}
 \begin{itemize}
  \item $(\zeta_A,0)$ and $(0,\zeta_a)$ are two stable equilibria, $(0,0)$ is unstable and $(\chi_A,\chi_a)$ is a saddle point.
 \item The set
\begin{equation}
 \label{def_setDasym}
\mathcal{D}^A_0:=\left\{ (z_A,z_a)\in \R^2, (\beta_A-1)z_A > (\beta_a-1)z_a  \right\}
\end{equation}
is a positively invariant set under the dynamical system~\eqref{eq_system1patch}. Moreover, any solution starting from $\mathcal{D}^A_0$ converges to $(\zeta_A,0)$ when $t$ converges to $+\infty$.
 \item The set
\begin{equation*}
 \mathcal{D}^a_0:=\{(z_A,z_a)\in \R^2, (\beta_A-1)z_A < (\beta_a-1)z_a \},
\end{equation*}
is a positively invariant set under the dynamical system~\eqref{eq_system1patch}. Any solution starting from $\mathcal{D}^a_0$ converges to $(0,\zeta_a)$ when $t$ converges to $+\infty$.
 \item Finally, $\{(z_A,z_a)\in \R, (\beta_A-1)z_A=(\beta_a-1)z_a\}$ is also a positively invariant set and any solution starting from this set converges to $(\chi_A,\chi_a)$ when $t$ grows to $+\infty$.\\
 \end{itemize}
\end{lemma}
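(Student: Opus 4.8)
The plan is to decouple the planar flow by passing to the total size $S=z_A+z_a$ and the allelic frequency $p=z_A/S$ on the positive cone $\{S>0\}$ (which is invariant, the two axes being invariant by uniqueness). The crucial structural observation is that the natural death and competition contributions $-(d+cS)$ are identical in both equations of~\eqref{eq_system1patch}, so they cancel in the frequency dynamics and leave an \emph{autonomous scalar} equation
\begin{equation*}
\dot p = b\,p(1-p)\big[(\beta_A-1)\,p-(\beta_a-1)(1-p)\big].
\end{equation*}
Thus \eqref{eq_system1patch} is a skew-product over a one-dimensional flow. The scalar field vanishes exactly at $p=0$, $p=1$ and at the interior value $r:=(\beta_a-1)/(\beta_A-1+\beta_a-1)\in(0,1)$, and since $\beta_A,\beta_a>1$ its sign is that of $(\beta_A-1)z_A-(\beta_a-1)z_a$. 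Hence $\mathcal{D}^A_0=\{p>r\}$, $\mathcal{D}^a_0=\{p<r\}$ and the diagonal is $\{p=r\}$; each is forward invariant because $p$ is strictly monotone off the equilibria, increasing to $1$ on $\{p>r\}$, decreasing to $0$ on $\{p<r\}$, and constant on the diagonal.

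Next I would control the size variable. Summing the two equations and using $\beta_A,\beta_a>1$ gives $\dot S\ge S(b-d-cS)$ and $\dot S\le S(b\beta_{\max}-d-cS)$, so $S$ is trapped in a compact interval bounded away from $0$ (here $b>d$ is exactly what forbids extinction) and bounded above. In particular no trajectory issued from $S(0)>0$ can approach $(0,0)$, which already yields the instability of the origin.

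To upgrade convergence of $p$ into convergence of the full planar trajectory I would proceed as follows. On $\{p>r\}$ the only equilibrium is $(\zeta_A,0)$ (the others have $p\in\{0,r\}$ or $S=0$), trajectories stay in a compact set, and $p$ is strictly increasing, so no periodic orbit can exist. The Poincar\'e--Bendixson theorem then forces the $\omega$-limit set to be an equilibrium; since $p(t)\uparrow p_\infty\in(r,1]$ and the $\omega$-limit set must lie in the invariant level $\{p=p_\infty\}$, necessarily $p_\infty=1$ and the limit is $(\zeta_A,0)$. The case $\{p<r\}$ is symmetric and gives $(0,\zeta_a)$. On the invariant diagonal $p\equiv r$, and the residual equation for $S$ is logistic, $\dot S=S\big(b\,g(r)-d-cS\big)$ with $g(r)=\beta_A r^2+2r(1-r)+\beta_a(1-r)^2$; it converges to $S^\ast=(b\,g(r)-d)/c$, and a direct computation identifies $(rS^\ast,(1-r)S^\ast)$ with $(\chi_A,\chi_a)$.

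Finally, the stability statements come cheaply from the skew-product structure: in the $(p,S)$ chart the Jacobian is lower triangular, because $\dot p$ does not involve $S$, so its eigenvalues are simply $\partial_p\dot p$ and $\partial_S\dot S$ read off the two scalar flows. At $(\zeta_A,0)$ and $(0,\zeta_a)$ both are negative (stable nodes); at $(\chi_A,\chi_a)$ one has $\partial_p\dot p>0$ (since $r$ is a repeller of the frequency flow) and $\partial_S\dot S<0$, giving a saddle; the origin is handled by the size estimate above. I expect the only delicate point to be this last upgrade from scalar to planar convergence, i.e.\ excluding any nontrivial $\omega$-limit behaviour, which is precisely why the monotonicity of $p$ combined with Poincar\'e--Bendixson (equivalently, the theory of asymptotically autonomous systems) is the load-bearing ingredient of the argument.
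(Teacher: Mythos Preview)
Your argument is correct and is a genuinely different route from the paper's. The paper works with the weighted combinations $\Omega=(\beta_A-1)z_A-(\beta_a-1)z_a$ and $\Sigma=(\beta_A-1)z_A+(\beta_a-1)z_a$, observes that the sign of $\Omega$ is preserved (giving invariance of $\mathcal{D}^A_0$), and then on $\mathcal{D}^A_0$ introduces the Lyapunov function $W=\ln(\Sigma/\Omega)$, checks $\dot W\le 0$ with equality only on $\{z_a=0\}$, and concludes via LaSalle's invariance principle. Your $(p,S)$ change of variables makes the mechanism more transparent: the key fact that $\dot p$ is autonomous is exactly the skew-product structure that the paper's Lyapunov function is implicitly exploiting, since $\Sigma/\Omega$ is in fact a function of $p$ alone and $\dot W\le 0$ is equivalent to $\dot p\ge 0$ on $\{p>r\}$. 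What your approach buys is a cleaner stability analysis (the Jacobian is triangular in $(p,S)$, so the eigenvalues are read off directly rather than computed case by case), and the convergence on the diagonal and on each half-plane all follow from the same scalar picture. What the paper's approach buys is that it avoids the coordinate singularity at $S=0$ and packages the global convergence into a single Lyapunov argument; your invocation of Poincar\'e--Bendixson is in fact not needed, since once $p(t)\to p_\infty$ monotonically the invariance of the $\omega$-limit set already forces $p_\infty\in\{0,r,1\}$, and then the residual $S$-dynamics on $\{p=p_\infty\}$ is logistic.
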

In other words, the system is bi-stable: All trajectories converge to $(\zeta_A,0)$ or $(0,\zeta_a)$, except the trajectories starting from a line (see Fig.~\ref{fig_phase}).
A direct consequence of this Lemma is that the basin of attraction $\mathcal{D}^{\alpha,\alpha'}_{0,0}$ are the ones described by Theorem~\ref{theo_sysasym}. 
\begin{figure}[h]
\begin{center}
\includegraphics[width=0.7\textwidth]{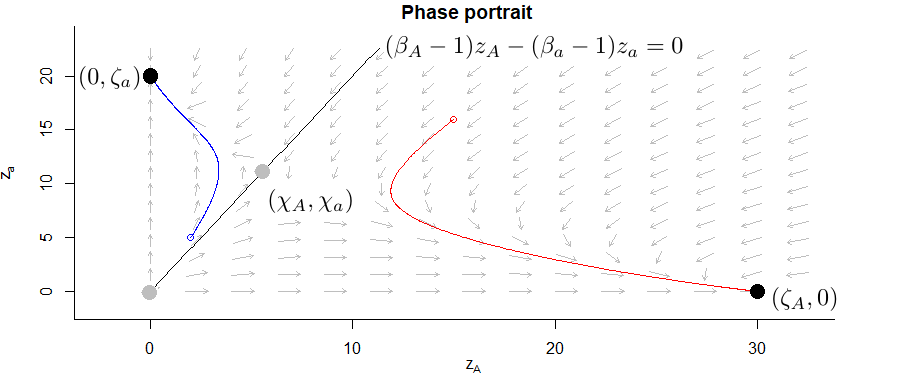}
\caption{\label{fig_phase} Example of phase portrait of a single patch dynamics. The black line is the set of initial conditions for which trajectories converge to the unstable equilibrium $(\chi_A,\chi_a)$. Above (resp. below) this line, trajectories converge to $(0,\zeta_a)$ (resp. $(\zeta_A,0)$). The red and the blue curves are examples of trajectories. The black and the gray points represent respectively the stable and the unstable equilibria.}
\end{center}
\end{figure}

\subsection{Parameters influence on the time before reproductive isolation}
\label{subsec_num}
In this section, the model under study is the initial one with two demes. I used functional studies and simulations to explore the influence of migration rates and mating preference parameters on the process. The simulations below were computed with the following demographic parameters:
\begin{equation}
\label{param_simu}
\beta_A=2, \quad \beta_a=1.5, \quad b=2, \quad d=1 \quad \text{and} \quad c=0.1,
\end{equation}
unless stated otherwise. For these parameters, 
$$\zeta_A =30 \qquad \text{and} \qquad \zeta_a=20.
$$

\noindent 
\textbf{Influence of parameters on the time before reproductive isolation}: Assume that the process starts from a state ${\bf z}^0\in \mathcal{D}^{A,a}_{m_A,m_a}$. Then, according to Theorems \ref{theo_sysasym} and \ref{theo_asym}, the trajectory will reach a neighborhood $\mathcal{B}_{A,a,\eps}$ of equilibrium $(\zeta_A,0,0,\zeta_a)$ after a time of magnitude $\log(K)\omega(A,a)^{-1}$.  
Direct functional studies ensure that the constant of interest, $\omega(A,a)^{-1}$, is a decreasing function with respect to $\beta_A$ and to $\beta_a$ whatever the other parameters are (see Fig.~\ref{fig_wp}, left).
Hence, the stronger the sexual preference is, the faster the reproductive isolation is.

Then, I focus on how the constant depends on $m_A$ and $m_a$. It may be natural to consider that $m_A$ and $m_a$ can be rewritten using three positive parameters $\gamma_A$, $\gamma_a$ and $m$ as follows:
$$
m_A:=\gamma_A m \quad \text{and} \quad m_a:=\gamma_a m.
$$
In this way, both migration parameters change simultaneously with $m$. Once again, a direct functional study ensures that $\omega(A,a)^{-1}$ is a non-increasing function with respect to $m$ (see Fig.~\ref{fig_wp}, right). \begin{figure}[!h]
\begin{minipage}{0.49\textwidth}
 \begin{center}
  \includegraphics[width=0.9\textwidth]{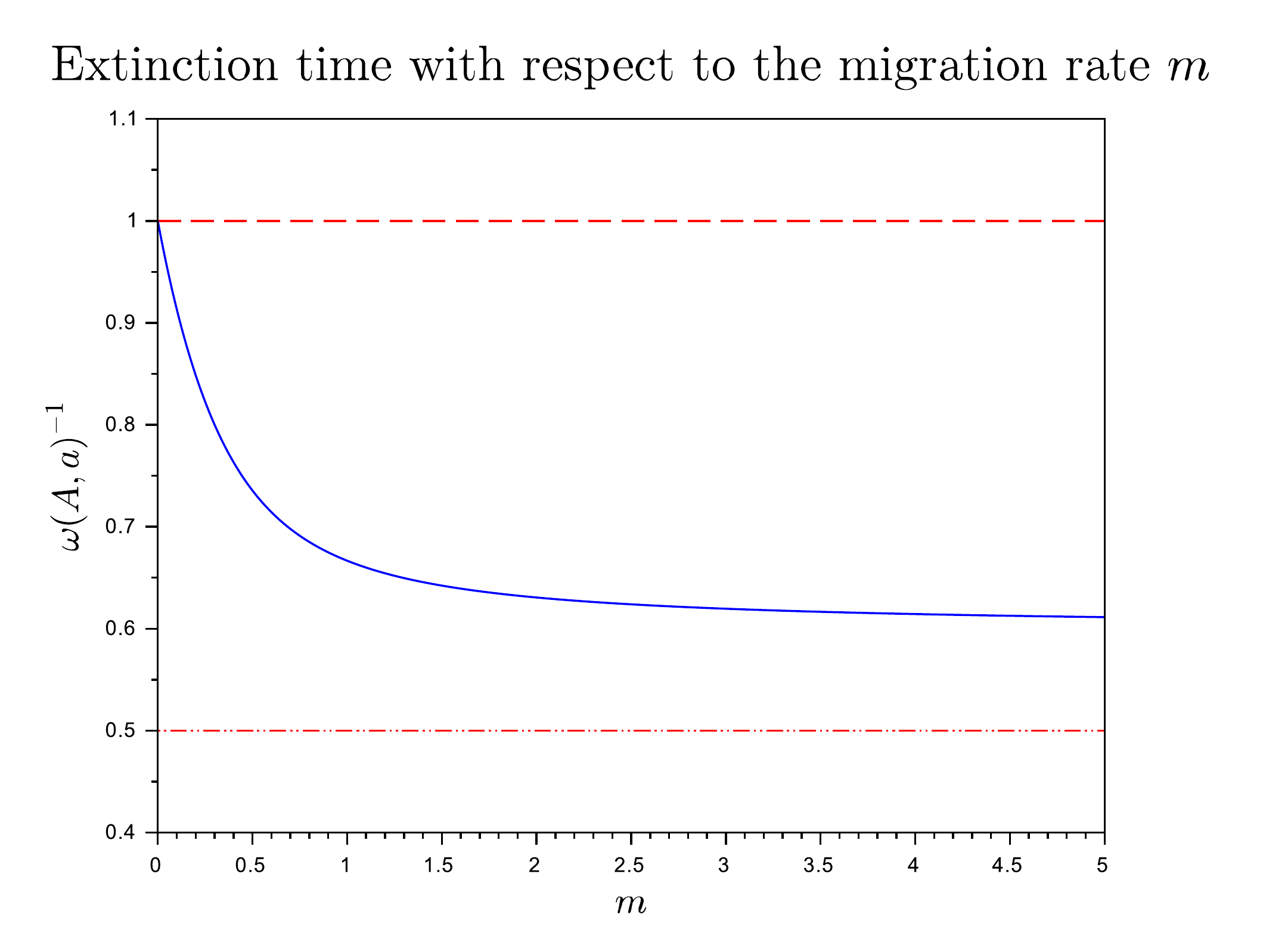}
 \end{center}
 \end{minipage}
 \begin{minipage}{0.49\textwidth}
 \begin{center}
\includegraphics[width=0.9\textwidth]{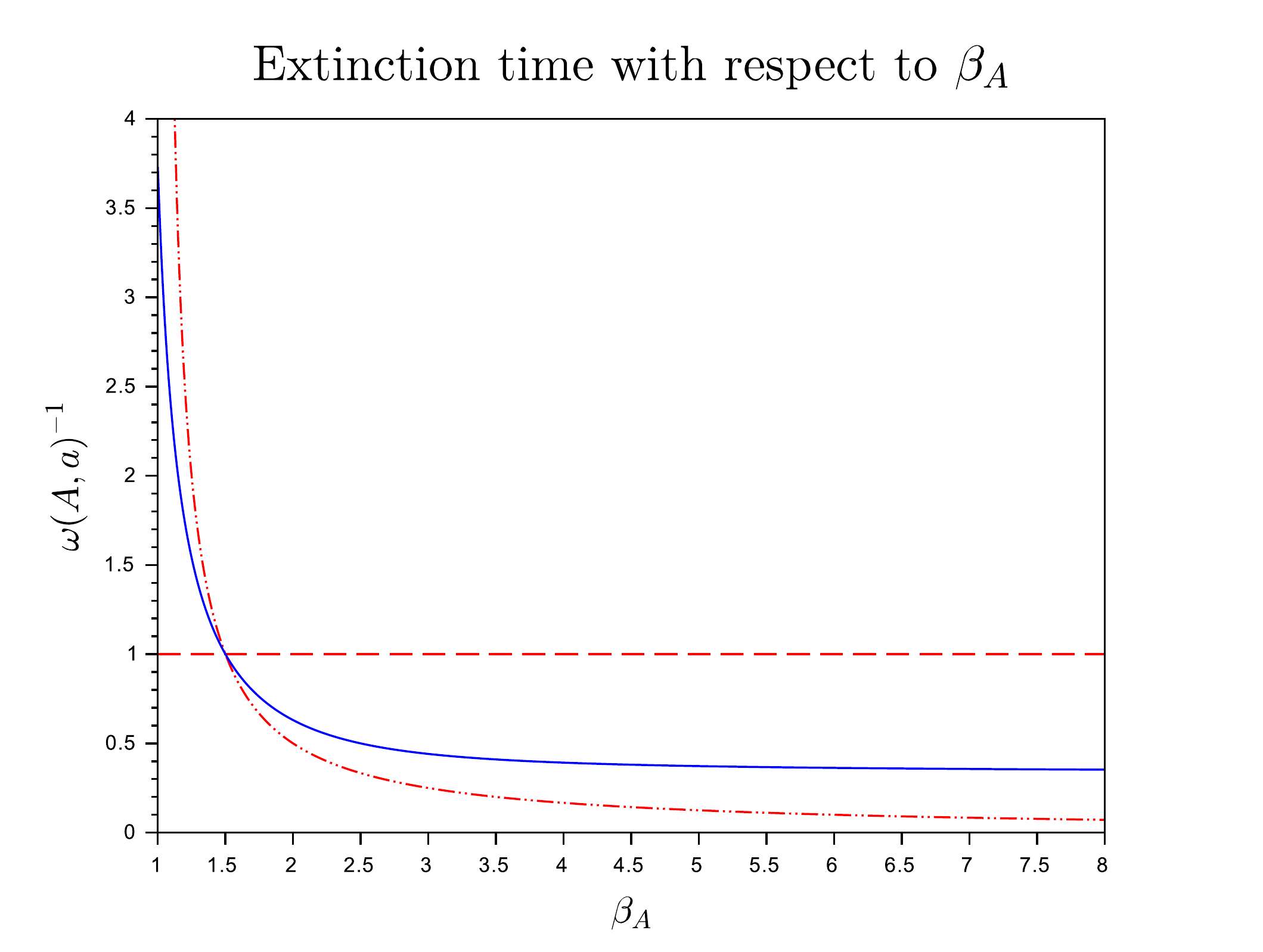}  
 \end{center}
 \end{minipage}
 \caption{\label{fig_wp} 
 Graphs of the constants in front of the times before reproductive isolation, $\omega(A,a)^{-1}$ (blue line), $\omega(A,A)^{-1}$ (red dashed line), $\omega(a,a)^{-1}$ (red dashed-dotted line), with respect to $m$ (left) and to $\beta_A$ (right). The demographic parameters are defined by~\eqref{param_simu}, $\gamma_A=1$, $\gamma_a=\beta_a-1=0.5$ and $\beta_A=2$ on the left and $m=2$ on the right.}
\end{figure}
Hence, increasing both migration rates at the same time accelerates the reproductive isolation, in the same way as when mating preference parameters increase. Moreover, the migration parameters used in the model are frequently-dependent terms such that individuals are more prone to migrate when they do not find suitable mate in their deme. With this in mind, the first conclusion is that a large migration rate seems to strengthen the homogamy.

The result is then improved by studying how constant $\omega(A,a)^{-1}$ changes with respect to $m_A$ and $m_a$ separately. A direct computation shows that $\omega(A,a)^{-1}$ is a decreasing function with respect to $m_A$ if $\beta_A>\beta_a$ and it is an increasing function with respect to $m_A$ if $\beta_A<\beta_a$. In other words, if $A$-individuals have a stronger sexual preference than $a$-individuals ($\beta_A>\beta_a$), the bigger their migration rate is when they are in contact with too much $a$-individuals, the shorter the time before reaching the equilibrium is. Once again, it highlights that the effects of migration and sexual preference are similar. However, assuming again that $A$-individuals have a stronger sexual preference than $a$-individuals ($\beta_A>\beta_a$), the bigger the $a$-individuals migration rate is, the longer the time before reproductive isolation is. This is more surprising. In particular, it highlights that a large migration rate does not only reflect a strong sexual preference but implies more involved behavior. This will be corroborated in what follows.\\

\noindent
\textbf{Basins of attraction} : I then explored how basins of attraction are modified when migration parameters increase. To simplify the study, I assumed here that $m:=m_A=m_a$.\\
\indent Figure~\ref{fig1_asym} presents the trajectories of some solutions of dynamical system~\eqref{systdetasym} in both
phase planes which represent both patches. The trajectories are drawn for the initial condition
$$
z_{A,1}(0)=4, \quad z_{a,1}(0)=10, \quad z_{A,2}(0) = 8.5 \quad \text{and} \quad z_{a,2}(0)=15,
$$
and for three different values of $m$: $ 0, 1$ and $5$.
\begin{figure}[h]
 \begin{minipage}{0.49\textwidth}
\centering
  \includegraphics[width=0.95\textwidth]{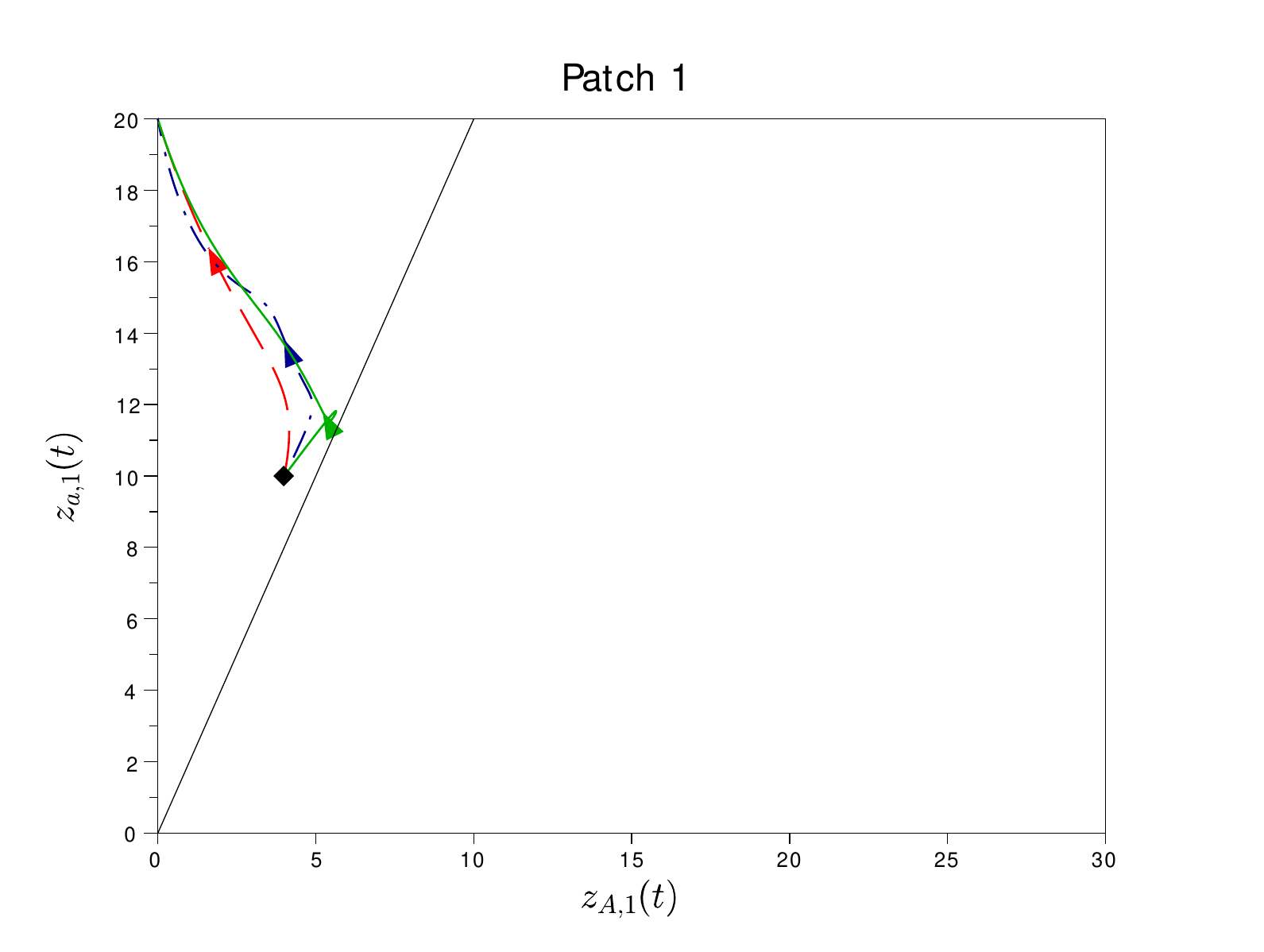}
 \end{minipage}
\begin{minipage}{0.49\textwidth}
\centering
  \includegraphics[width=0.95\textwidth]{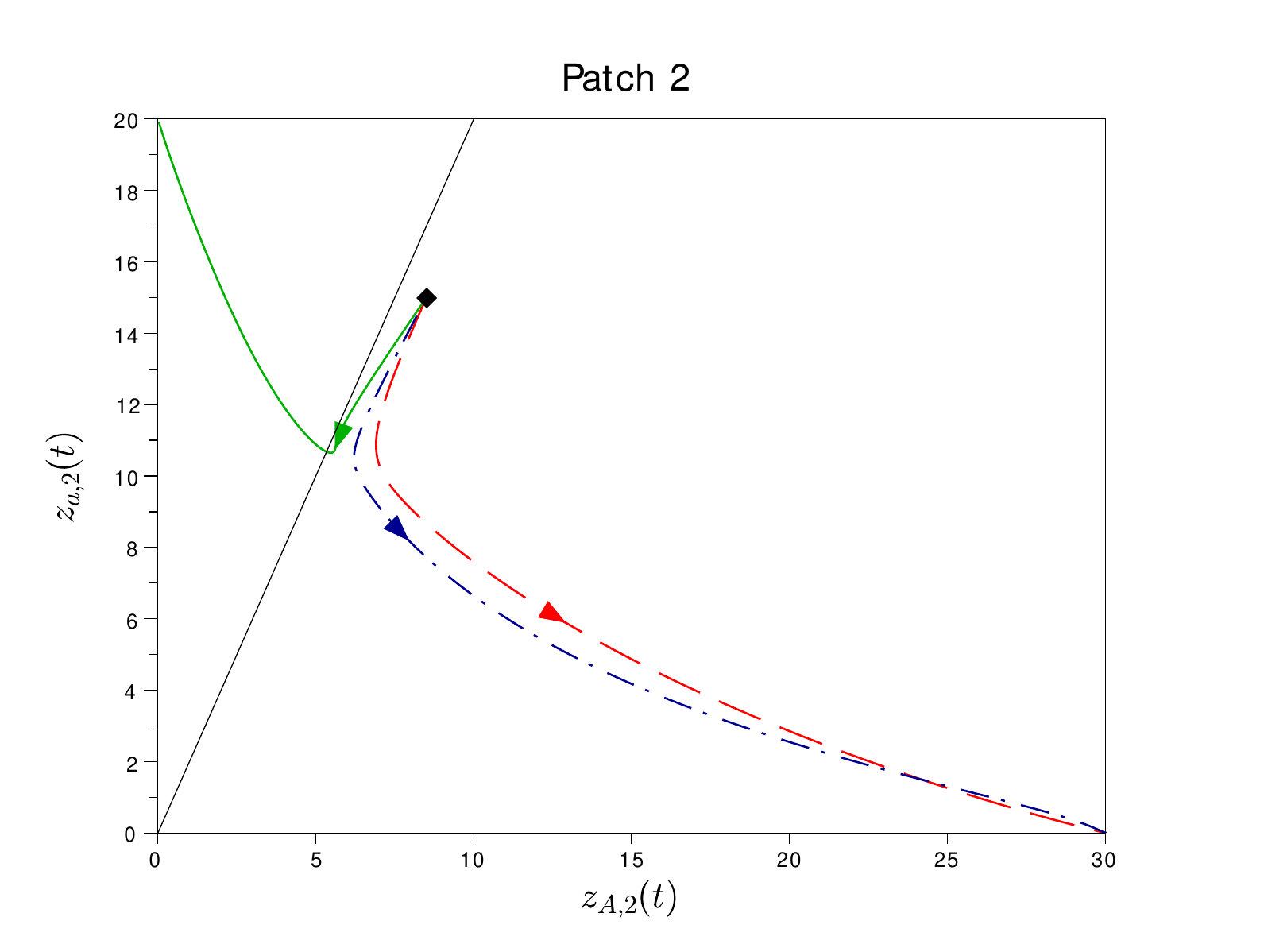}
 \end{minipage}
\caption[Plots of the trajectories in the phase planes]{\label{fig1_asym}\small{Plots of the trajectories in the phase planes  
which represent the patch $1$ (left) and the patch $2$ (right) for $t\in [0,10]$ 
and for three values of $m$: $m=0$ (red), $m=1$ (blue), $m=5$ (green). The initial condition is $(4,10,8.5,15)$, represented by the black dots. The dark line is the solution of $(\beta_A-1)z_{A}-(\beta_a-1)z_{a}=0$}}
\end{figure}
It is important to notice that the equilibrium reached depends not only on the initial condition but also on the value of $m$, unlike the symmetrical case. Indeed, on the example of Figure~\ref{fig1_asym}, when $m$ is small, the trajectory converges to $(0,\zeta_a,\zeta_A,0)$. When $m$ is larger, only $a$-individuals survive, the trajectory converges to $(0,\zeta_a,0,\zeta_a)$. Hence, a large migration rate $m$ can favor allele $a$, which codes for the weakest of both mating preferences ($\beta_a<\beta_A$), to invade both patches. 

 Then, an example of basins of attraction $\mathcal{D}^{\alpha,\alpha'}_{m,m}$ is given in the case of a large migration parameter ($m=5$). Figure~\ref{fig2_asym} presents the projections of the four sets on six different planes. More specifically, each graph (a-f) represents the equilibrium reached with respect to the initial condition in patch $1$ for a couple of initial conditions in patch $2$, which is plotted on graph (g).
In order to compare results for $m=5$ and $m=0$, I plotted the line solution of $(\beta_A-1)z_{A,1}-(\beta_a-1)z_{a,1}=0$ on all planes. Indeed, according to Lemma~\ref{lemma_sanspasym}, without migration any trajectory with initial conditions in patch 1 above (resp. below) this line converges to a patch filled with $a$-individuals (resp. $A$-individuals).
\begin{figure}[ht]
\begin{minipage}{0.76\textwidth}
 
\begin{minipage}{0.32\textwidth}
\centering
  \includegraphics[width=0.99\textwidth]{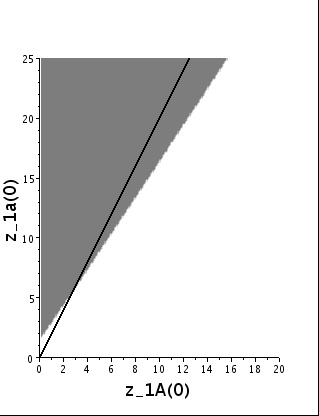}
\small{(a) $z_{A,2}(0)=4$, $z_{a,2}(0)=5 $}\\
 \end{minipage}
\begin{minipage}{0.32\textwidth}
\centering
  \includegraphics[width=0.99\textwidth]{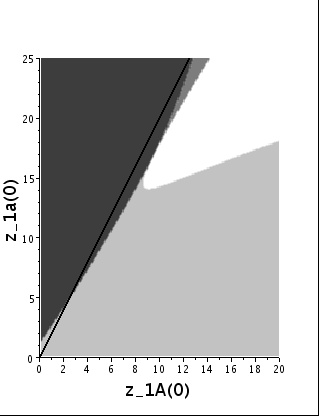}
\small{(b) $z_{A,2}(0)=4$, $z_{a,2}(0)=10$}\\
 \end{minipage}
 \begin{minipage}{0.32\textwidth}
\centering
  \includegraphics[width=0.99\textwidth]{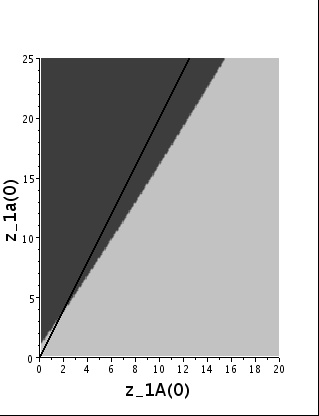}
\small{(c) $z_{A,2}(0)=4$, $z_{a,2}(0)=15$}\\
 \end{minipage}

\begin{minipage}{0.32\textwidth}
\centering
  \includegraphics[width=0.99\textwidth]{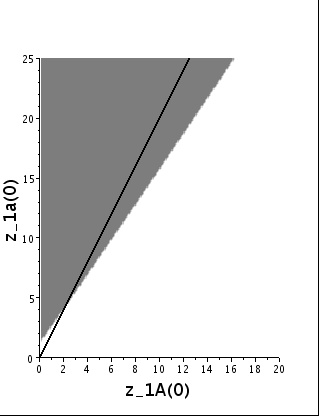}
\small{(d) $z_{A,2}(0)=8$, $z_{a,2}(0)=5 $}\\
 \end{minipage}
\begin{minipage}{0.32\textwidth}
\centering
  \includegraphics[width=0.99\textwidth]{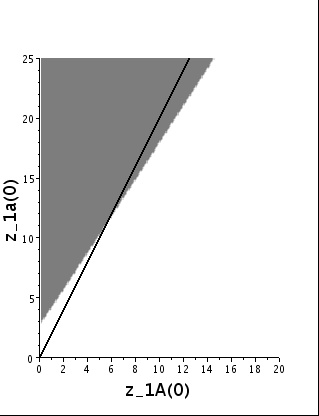}
\small{(e) $z_{A,2}(0)=8$, $z_{a,2}(0)=10$}\\
 \end{minipage}
\begin{minipage}{0.32\textwidth}
\centering
  \includegraphics[width=0.99\textwidth]{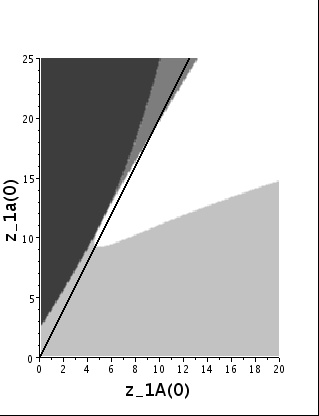}
\small{(f) $z_{A,2}(0)=8$, $z_{a,2}(0)=15$}\\
 \end{minipage}

\end{minipage}
\begin{minipage}{0.23\textwidth}
\centering
  \includegraphics[width=0.99\textwidth]{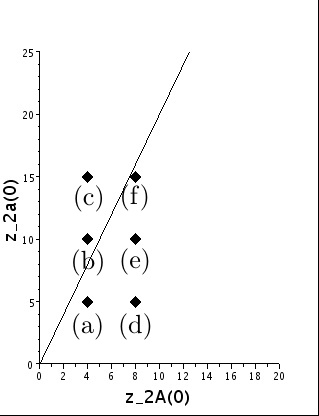}\\
\small{(g) Representation of the initial conditions in the patch 2}\\
 \end{minipage}

\caption[Projections of sets $\mathcal{D}^{\alpha,\alpha'}_m$ with $m=5$]{\label{fig2_asym}\small{(a-f): Projections of sets $\mathcal{D}^{\alpha,\alpha'}_5$ on the planes characterized by the values of $(z_{A,2}(0),z_{a,2}(0))$ given in captions. On each plane, the four sets from white to dark grey corresponds to initial conditions with convergence to $(\zeta_A,0,\zeta_A,0)$, $(\zeta_A,0,0,\zeta_a)$, $(0,\zeta_a,\zeta_A,0)$ and $(0,\zeta_a,0,\zeta_a)$ respectively. The black line is the solution of $(\beta_A-1)z_{A,1}-(\beta_a-1)z_{a,1}=0$. (g): The black diamond points correspond to the initial conditions in patch 2 taken to obtain plots (a) to (f).}}
\end{figure}
Generally, observe that when the number of $a$-individuals is large in patch $1$, these individuals are favored by a large migration rate. 
Thus, the conclusion here is that a large migration parameter $m$ favors the allele coding for the weakest mating preference by mixing the populations of both patches.\\

\noindent
\textbf{Minimal number of individuals for invasion} : 
Initially, each patch is filled with a density of $\zeta_a$ $a$-individuals and $A$-individuals are introduced in patch $1$. 
To corroborate previous observations, I computed the minimal number of $A$-individuals that is needed to be introduced such that they can survive, i.e. such that the dynamical system converges to a stable equilibrium with a positive number of $A$-individuals. I computed this minimal number, denoted by $N^{\min}(\beta_A,m)$, for a range of values of $\beta_A$ ($\beta_A\in (1,2]$) and $m$ ($m\in [0,2]$), other parameters are defined by~\eqref{param_simu}. 
\begin{figure}[!ht]
\begin{center}
 \includegraphics[width=0.99\textwidth]{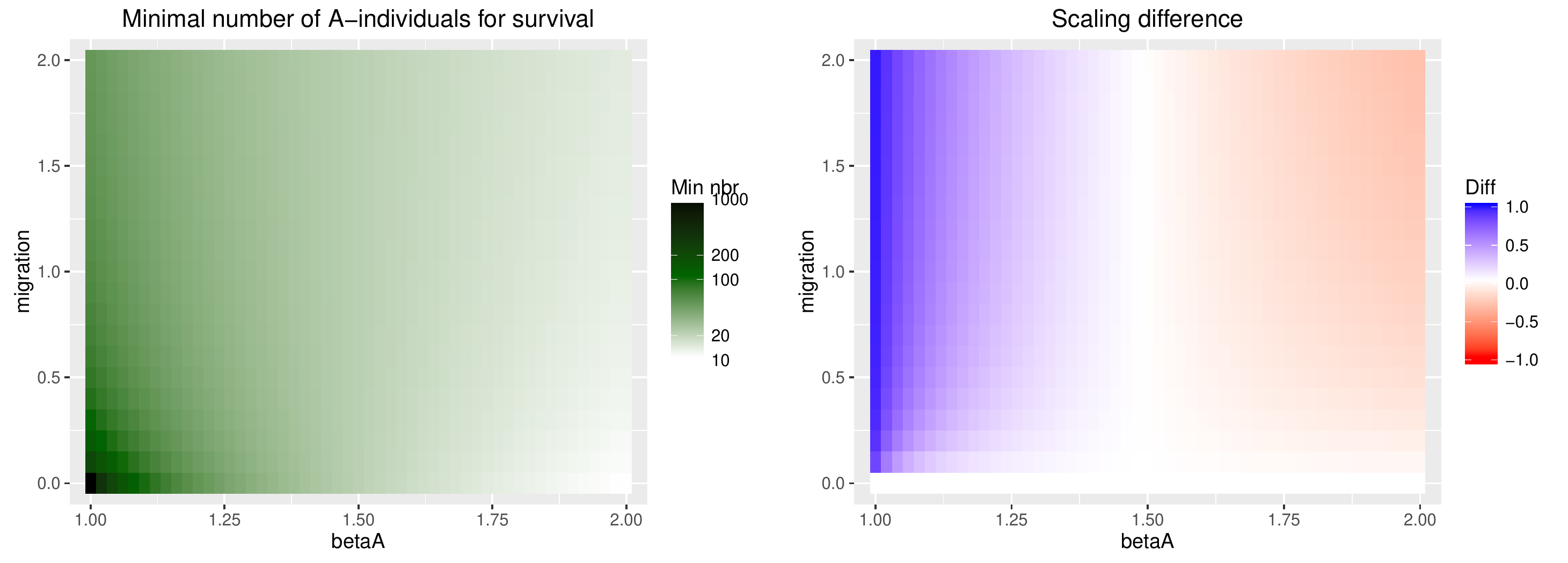}
\end{center}
\caption{\label{fig_minnbr} Left: Minimal number of initial $A$-individuals in patch $1$ that is needed for a long time survival when starting from two patches filled with $\zeta_a$ $a$-individuals;
a logarithmic color scale is used. Right: Scaling differences between the minimal number of $A$-individuals needed without migration (i.e. $(\beta_a-1)\zeta_a/(\beta_A-1)$) and the one computed on the left plot. Parameters are defined by~\eqref{param_simu} where $\beta_a=1.5$.} 
\end{figure}
On the left part of Figure~\ref{fig_minnbr}, the number $N^{\min}(\beta_A,m)$ is drawn using a logarithmic scale. 
Note that the minimal number of $A$-individuals required for survival decreases when $\beta_A$ increases. For example, when $\beta_A$ is large ($\beta_A=2$), observe that the minimal number of $A$-individuals needed for survival, is only half (resp. two-thirds) of $\zeta_a=20$ when $m=0$ (resp. $m=2$). Moreover, if $\beta_A$ and $m$ are sufficiently large ($\beta_A\geq 2.9$ and $m\geq 1.9$ (data not shown)), the $A$-population replaces the resident $a$-population in both patches as soon as the initial number of $A$-individuals is equal to $N^{\min}(\beta_A,m)$. This suggests that individuals with a higher mating preference have a selective advantage.

Secondly, to better understand how $m$ affects $N^{\min}(\beta_A,m)$, the minimal number of $A$-individuals needed for survival, I computed the scaling difference
$$
D(\beta_A,m):=\frac{N^{\min}(\beta_A,0)-N^{\min}(\beta_A,m)}{N^{\min}(\beta_A,0)},
$$
on the right part of Figure~\ref{fig_minnbr}.
Section~\ref{subsec_withoutmig} implies that $N^{\min}(\beta_A,0)=(\beta_a-1)\zeta_a/(\beta_A-1)$. 
For $\beta_A$ and $m$ fixed, a positive value of $D(\beta_A,m)$ indicates that the minimal number of $A$-individuals needed for survival is smaller than in the case without migration, that is to say, the migration favors $A$-individuals, especially if $D(\beta_A,m)$ is large. The opposite conclusion holds for negative value of $D(\beta_A,m)$.
Here, when $\beta_A$ is smaller than $\beta_a=1.5$, $D(\beta_A,m)$ is positive and increases with migration $m$ whereas, when it is smaller than $\beta_a$, it is decreasing with $m$. Hence, migration seems here again to favor the allele with the weakest mating preference.\\

\noindent
\textbf{Discussion}: The first conclusion is that a population with a large mating preference has selective advantages: (1) the larger the mating preference strength is, the smaller the time before reaching an equilibrium where this allele is maintained is, and (2) a population with a strong mating preference can invade a resident population presenting a weak preference even if its initial number of individuals is small. Same kind of conclusion is drawn by \cite{smadja_catalan_2004}. In the latter, the authors predict that the asymmetrical mating preference observed between two species of mouse could lead to
 the replacement of the subspecies with the weakest mating preference
 (M. m. domesticus) by the other subspecies
 (M. m. musculus), if no other mechanism was involved. 
This conclusion is a substantial added value compared to \cite{coron2016stochastic} where only the symmetric case ($\beta_A=\beta_a$, $m_A=m_a$) is considered. Accounting for asymmetrical preference gave the possibility to better understand advantages of a strong mating preference.

Migration has a more involved impact on the system dynamics than mating preference, although the frequency-dependent term I used for migration seemed only to mimic mating preferences. More precisely, there exists a trade-off between two phenomena \cite{coron2016stochastic}:
 (1) large migration rates can help individuals to escape disadvantageous patches \cite{clobert2001dispersal} but (2) large migration rates entail also risks of moving to unfamiliar patches (i.e. filled with not-preferred individuals) and thus may increase the time before reproductive isolation. This is understandable since the migration terms only focus on the departure patch. 
 More surprisingly,
large migration rates seem to favor alleles with reduced mating preferences. 
 This tendency was not noticed by \cite{coron2016stochastic} and could be linked to the effects of migration on habitat specialization \cite{brown1992evolution,cuevas2003evolution,elena2009evolution}. In these articles, the authors highlight that migration may prevent the local specialization of subpopulations and favor generalist species. Hence, in both cases, large migration rates tend to avoid specialized behaviors in terms of ecological niches adaptation or mating partners adaptation.

\section{Proofs}

\label{subsec_perturbation}
This last part is devoted to the proof of Theorem~\ref{theo_sysasym}.
The main idea is to start from the results without migration, then use a perturbation method to make $m_A$ and $m_a$ grow up and deduce results for some positive migration parameters. 

However, this perturbation technique will only apply on a bounded set of $\R^{\mathcal{E}}$ excluding $0$. Thus, let us first prove Lemma~\ref{lemma_invsetasym}, which allows us to restrict the study of the dynamical system~\eqref{systdetasym} to the compact set $\mathcal{S}$. 
To help with proofreading, we recall here the definitions of the weighted sums :
 \begin{eqnarray*}
 &\Sigma_i:= (\beta_A-1)z_{A,i}+(\beta_a-1)z_{a,i}, \text{ for } i=1,2,\\
 &\Sigma:= \Sigma_1+\Sigma_2 = (\beta_A-1)(z_{A,1}+z_{A,2})+(\beta_a-1)(z_{a,1}+z_{a,2}).
 \end{eqnarray*}

\begin{proof}[Proof of Lemma~\ref{lemma_invsetasym}]
 The proof is based on the equations satisfied by $\Sigma_1$, $\Sigma_2$ and $\Sigma$. From~\eqref{systdetasym}, we find
\begin{multline}
\label{eq_sigma1}
 \frac{d}{dt}\Sigma_1=\Sigma_1 \left[ b\frac{\Sigma_1}{z_{A,1}+z_{a,1}} -2b(\beta_A-1)(\beta_a-1)\frac{z_{a,1}z_{A,1}}{(z_{A,1}+z_{a,1})\Sigma_1} +b-d-c(z_{A,1}+z_{a,1}) \right]\\
-\big(m_A(\beta_A-1)+ m_a(\beta_a-1)\big)\left[\frac{z_{A,1}z_{a,1}}{z_{A,1}+z_{a,1}}-\frac{z_{A,2}z_{a,2}}{z_{A,2}+z_{a,2}}\right].
\end{multline}
Since $\Sigma_1^2-2(\beta_A-1)(\beta_a-1)z_{a,1}z_{A,1}\geq 0$ and $\Sigma_1\geq (\beta_{\min}-1)(z_{a,1}+z_{A,1})$, we obtain
\begin{equation}
\label{eq_sigma1geq}
 \frac{d}{dt}\Sigma_1\geq \Sigma_1 \left[ b-d-\frac{c}{(\beta_{\min}-1)}\Sigma_1 -\big(m_A(\beta_A-1)+ m_a(\beta_a-1)\big)\frac{z_{A,1}z_{a,1}}{(z_{A,1}+z_{a,1})\Sigma_1} \right].
\end{equation}
We then find an upper bound of $\frac{z_{A,1}z_{a,1}}{(z_{A,1}+z_{a,1})\Sigma_1}$:
\begin{align*}
 \Sigma_1 (z_{A,1}+z_{a,1})&=(\beta_A-1)z_{A,1}^2+(\beta_a-1)z_{a,1}^2+(\beta_A+\beta_a-2)z_{A,1}z_{a,1}\\
 & \geq (\beta_{\min} -1)[z_{A,1}^2+z_{a,1}^2+2 z_{A,1}z_{a,1}]\\
 & \geq 4(\beta_{\min} -1)z_{A,1}z_{a,1}.
\end{align*}
In addition with~\eqref{ass_pasym} and~\eqref{eq_sigma1geq}, we deduce
\begin{equation*}
 \frac{d}{dt}\Sigma_1\geq \Sigma_1 \left[ \frac{b-d}{2}-\frac{c}{(\beta_{\min}-1)}\Sigma_1 \right].
\end{equation*}
Hence, as soon as $\Sigma_1<{(\beta_{\min} -1)(b-d)}/{2c}$, its derivative is strictly positive. In other words, if $\Sigma_1(0)\leq {(\beta_{\min} -1)(b-d)}/{4c}$, there exists $t_1>0$ such that for all $t\geq t_1$, $\Sigma_1(t)$ is higher than this threshold. Moreover, if $\Sigma_1(t_2)$ is higher than this threshold, for all $t\geq t_2$, $\Sigma_1(t)$ remains higher than it. The same conclusion holds for $\Sigma_2$.\\
Let us now deal with $\Sigma$. From equations~\eqref{eq_sigma1} satisfied by $\Sigma_1$ and $\Sigma_2$, we find
\begin{align*}
 \frac{d}{dt}\Sigma&=\sum_{i=1,2}\Sigma_i \left[ b\frac{\beta_Az_{A,i}+\beta_a z_{a,i}}{z_{A,i}+z_{a,i}}-d-c(z_{A,i}+z_{a,i})\right]-2b(\beta_A-1)(\beta_a-1)\frac{z_{A,i}z_{a,i}}{z_{A,i}+z_{a,i}}\\
&\leq \sum_{i=1,2}\Sigma_i \left[ b\beta_{\max}-d-\frac{c}{\beta_{\max} -1}\Sigma_i\right]\\
&\leq \Sigma \left[ b\beta_{\max}-d-\frac{c}{2(\beta_{\max} -1)}\Sigma\right].
\end{align*}
Using a reasoning similar to the previous one, we conclude that there exists a time after which $\Sigma(t)$ remains lower than $4(\beta_{\max}-1)(b\beta_{\max}-d)/c$. Finally any trajectory hits $\mathcal{S}$ after a finite time and $\mathcal{S}$ is a positively invariant set. That ends the proof of Lemma~\ref{lemma_invsetasym}.

\end{proof}

Lemma~\ref{lemma_invsetasym} implies that the study of the dynamical system~\eqref{systdetasym} can be restricted to the study of trajectories belonging to $\mathcal{S}$. Note that when $m_A=m_a=0$, Subsection~\ref{subsec_withoutmig} ensures that the dynamical system~\eqref{systdetasym} has exactly 9 equilibria which belong to $\mathcal{S}$:
\begin{eqnarray}
& \label{eqbre_attractive} (\zeta_A,0,0,\zeta_a),\; (\zeta_A,0,\zeta_A,0),\;  (0,\zeta_a,\zeta_A,0),\; (0,\zeta_a,0,\zeta_a),\\
& \label{eqbre_saddle3} (\chi_A,\chi_a,\zeta_A,0),\; (\chi_A,\chi_a,0,\zeta_a),\; (0,\zeta_a,\chi_A,\chi_a),\; (\zeta_A,0,\chi_A,\chi_a).\\
& \label{eqbre_saddle2} (\chi_A,\chi_a,\chi_A,\chi_a).
\end{eqnarray}
Equilibria~\eqref{eqbre_attractive} are stable fixed point whereas equilibria~\eqref{eqbre_saddle3} (resp.~\eqref{eqbre_saddle2}) are unstable with a local stable manifold of dimension $3$ (resp. $2$), i.e. there exists an empty interior set of dimension $3$ (resp. $2$) such that any trajectory starting from this set converges to equilibria~\eqref{eqbre_saddle3} or \eqref{eqbre_saddle2}.\\

In order to simplify the notation of the proofs, let us write the migration rates $m_A$ and $m_a$ using three parameters $\gamma_A\in [0,1]$, $\gamma_a\in [0,1]$ and $m\geq 0$ as
$$
m_A := m \gamma_A \quad \text{and} \quad m_a:=m \gamma_a.
$$
We consider that $\gamma_A$ and $\gamma_a$ are fixed parameters and we will make $m$ grow up in the following proof.
We can rewrite the dynamical system~\eqref{systdetasym} considering $m$ as a parameter
\begin{equation}
\label{eq_systF}
 \frac{d}{dt}\textbf{z}(t)=F(\textbf{z}(t),m).
\end{equation}
The solution of~\eqref{eq_systF} with initial condition ${\bf{z}}^0$ is written $t \mapsto \varphi_{m,\bf{z}^0}(t)$.
Our goal is to understand the dynamics of the flow $\varphi_{m,\bf{z}^0}$ associated to the vector field $F({\bf{z}},m)$ using $\varphi_{0,\bf{z}^0}$ (without migration) which is entirely described in Subsection~\ref{subsec_sansp}. Theorem~\ref{theo_sysasym} can be rewritten as follows using the notion of flow.

\begin{theorem}[Theorem~1']
\label{lemma_convsetasym}
There exists $m_0>0$ such that for all $m\leq m_0$, we can find four open subsets $(\mathcal{D}^{\alpha,\alpha'}_m)_{\alpha,\alpha'\in \CA}$ of $\mathcal{S}$ with the following properties:
\begin{itemize}
 \item The closure of $\cup_{\alpha,\alpha'\in \CA} \mathcal{D}^{\alpha,\alpha'}_m$ is equal to $\mathcal{S}$.
 \item For all ${\bf{z}^0}\in \mathcal{D}^{A,a}_m$, the flow $\varphi_{m,\bf{z}^0}(t)$ converges to $(\zeta_A,0,0,\zeta_a)$ when $t$ tends to $+\infty$. Similar results hold for the three other equilibria~\eqref{eqbre_attractive}.\\
\end{itemize}
\end{theorem}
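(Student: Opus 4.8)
The plan is to prove Theorem~1' by perturbation from the decoupled case $m=0$, using the structural stability afforded by hyperbolicity of the equilibria. The starting point is the complete description of the $m=0$ dynamics. When $m=0$ the system splits into two independent copies of~\eqref{eq_system1patch}, and Lemma~\ref{lemma_sanspasym} tells us each patch is bistable: almost every trajectory converges to $(\zeta_A,0)$ or $(0,\zeta_a)$, the exception being the separatrix line $\{(\beta_A-1)z_A=(\beta_a-1)z_a\}$ feeding the saddle $(\chi_A,\chi_a)$. Taking products, inside $\mathcal S$ the four stable equilibria~\eqref{eqbre_attractive} have open basins $\mathcal D^{\alpha,\alpha'}_0=\mathcal D^{\alpha}_0\times\mathcal D^{\alpha'}_0$ whose closures cover $\mathcal S$, the complement being the codimension-$\geq 1$ stable manifolds of the saddle-type equilibria~\eqref{eqbre_saddle3} and~\eqref{eqbre_saddle2}.

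First I would verify that each of the four equilibria in~\eqref{eqbre_attractive} is \emph{hyperbolic} for~\eqref{eq_systF} at $m=0$, i.e. the Jacobian $D_{\bf z}F({\bf z}^*,0)$ has all eigenvalues with strictly negative real part (they are asymptotically stable, hence I expect the spectrum to be strictly stable rather than merely non-positive). Here one must be a little careful: at a fixation equilibrium such as $(\zeta_A,0,0,\zeta_a)$ the frequency-dependent birth and migration terms carry ratios like $z_{a,1}/(z_{A,1}+z_{a,1})$ that are singular at the coordinate hyperplanes. The plan is to check that along the invariant coordinate faces these terms extend smoothly enough for the linearization to be well defined, and to compute the four eigenvalues explicitly; the two ``transverse'' eigenvalues governing the decay of the minority alleles $z_{a,1}$ and $z_{A,2}$ should be $-\omega(A,a)$ up to sign, matching the constant~\eqref{def_constAa} that reappears in Theorem~\ref{theo_asym}. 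Once hyperbolicity is established, the implicit function theorem gives, for each stable equilibrium, a smooth branch $m\mapsto {\bf z}^*(m)$ of equilibria of~\eqref{eq_systF} that remains hyperbolic and asymptotically stable for $m$ small; but since~\eqref{eqbre_attractive} are the exact fixation/isolation states~\eqref{eq_stableasym1}--\eqref{eq_stableasym2} for \emph{all} $m$ (the migration and minority-birth terms vanish identically there), the branch is in fact constant, and the four points remain asymptotically stable for $m\le m_0$.

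Next I would propagate stability from the equilibria to full basins. The clean way is to produce for each stable equilibrium a strict Lyapunov function, or at least a trapping neighborhood, that is robust in $m$. Concretely, for the target $(\zeta_A,0,0,\zeta_a)$ I would take a small neighborhood $U$ on which the $m=0$ flow contracts, use continuity of $F$ in $(\,{\bf z},m)$ on the compact $\mathcal S$ to keep $U$ forward-invariant and attracting for $m\le m_0$, and then define $\mathcal D^{A,a}_m:=\{{\bf z}^0\in\mathcal S:\varphi_{m,{\bf z}^0}(t)\in U \text{ for some } t\}$, the set of initial conditions eventually entering the trapping neighborhood. This set is open (by continuous dependence of the flow on initial data and the openness of $U$) and forward-invariant, and every trajectory entering $U$ converges to the equilibrium. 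The same construction yields $\mathcal D^{\alpha,\alpha'}_m$ for the other three equilibria, and by construction these four open sets are pairwise disjoint.

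It remains to show that $\overline{\bigcup_{\alpha,\alpha'}\mathcal D^{\alpha,\alpha'}_m}=\mathcal S$, equivalently that the exceptional set $\mathcal N_m:=\mathcal S\setminus\bigcup_{\alpha,\alpha'}\mathcal D^{\alpha,\alpha'}_m$ has empty interior. This is the main obstacle, because $\mathcal N_m$ is exactly the union of stable manifolds of the remaining (saddle-type) equilibria of~\eqref{eq_systF}, and I must argue that these persist as measure-zero, codimension-$\geq 1$ sets under perturbation. The plan here is, again, to establish hyperbolicity of equilibria~\eqref{eqbre_saddle3} and~\eqref{eqbre_saddle2} at $m=0$ (they have stable manifolds of dimension $3$ and $2$ respectively, so their Jacobians have $1$ and $2$ unstable eigenvalues), invoke persistence of hyperbolic equilibria and the smooth dependence of their local stable manifolds $W^s_{\mathrm{loc}}(m)$ on $m$ (stable-manifold theorem with parameters), and then note that any ${\bf z}^0\in\mathcal N_m$ has an $\omega$-limit set contained in the equilibrium set, hence lies on the global stable manifold of one of the saddles. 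Since for $m$ small each such global stable manifold is an injectively immersed submanifold of dimension at most $3<4$, their finite union has empty interior, giving the density claim. The delicate point to watch is that no \emph{new} equilibria, periodic orbits, or more exotic attractors are created for small $m>0$; I would rule these out inside the compact $\mathcal S$ by a continuity/degree argument together with the a priori confinement from Lemma~\ref{lemma_invsetasym}, using that the whole analysis takes place on a fixed compact set bounded away from the singular origin.
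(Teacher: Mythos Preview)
Your local analysis---hyperbolicity of the nine equilibria at $m=0$, continuation via the implicit function theorem, persistence of local (un)stable manifolds, and the observation that the four sinks~\eqref{eqbre_attractive} remain equilibria for every $m$---matches the paper's first step essentially line for line. (Two minor corrections: the denominators $z_{A,i}+z_{a,i}$ are bounded away from zero on $\mathcal S$, so there is no smoothness obstruction at the fixation equilibria; and the Jacobian there is triangular with eigenvalues $-(b\beta_\alpha-d)$ and $-b(\beta_\alpha-1)$, so the two transverse eigenvalues are $-b(\beta_A-1)$ and $-b(\beta_a-1)$, not both $-\omega(A,a)$---that constant only emerges once the $m>0$ coupling mixes the two minority coordinates.)

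The genuine gap is in your step~4. You want $\mathcal N_m$ to coincide with the union of the global stable manifolds of the five saddles, and for that you need every $\omega$-limit set in $\mathcal S$ to be a single equilibrium. A continuity/degree argument can control the \emph{number of equilibria} under perturbation, but it says nothing about periodic orbits or more complicated recurrent sets, and in four dimensions there is no Poincar\'e--Bendixson mechanism to exclude them. The paper sidesteps this issue entirely with a \emph{finite-time} perturbation argument rather than an $\omega$-limit one: at $m=0$, by compactness of $\mathcal S$ and the global picture of Lemma~\ref{lemma_sanspasym}, every trajectory reaches one of the nine local neighborhoods $\mathcal V_i^\eps$ within a uniform time; uniform continuity of the flow in $(m,{\bf z}^0,t)$ then forces the perturbed trajectory to land in $\cup_i\mathcal V_i$ by the same time for all small $m$. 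Once inside a saddle neighborhood, the paper runs a short case analysis using the known heteroclinic structure at $m=0$ and continuity of the unstable manifolds in $m$: an exiting trajectory can visit at most one further saddle neighborhood (only from the central saddle $(\chi_A,\chi_a,\chi_A,\chi_a)$ to one of the four mixed saddles~\eqref{eqbre_saddle3}) before it falls into $\mathcal W^c$ and hence into a sink neighborhood. This directly rules out nontrivial recurrence without ever needing a nonexistence result for periodic orbits---which is precisely the piece your outline is missing.
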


\begin{proof}
The first step is to construct a neighborhood around each equilibrium \eqref{eqbre_attractive}-\eqref{eqbre_saddle2} which includes a unique equilibrium of the dynamical system \eqref{systdetasym} with $m>0$.\\
Let us first focus our study on the equilibrium $(\zeta_A,0,0,\zeta_a)$. Subsection~\eqref{subsec_sansp} implies that, when $m=0$, the equilibrium $(\zeta_A,0,0,\zeta_a)$ is an attractive stable equilibrium.\\
The first derivative $D_{{\tb{z}}}F$ evaluated at $({\tb{z}},m)=((\zeta_A,0,0,\zeta_a),0)$ is 
\begin{equation}
\label{eq_matjac}
\begin{pmatrix}
 -(b\beta_A-d) & -b(\beta_A-1)-(b\beta_A-d) & 0 &0 \\
0 & -b(\beta_A-1) & 0 & 0\\
0 & 0 & -b(\beta_a-1) & 0\\
0 &0 & -b(\beta_a-1)-(b\beta_a-d) & -(b\beta_a-d)
\end{pmatrix}.
\end{equation}
Since matrix~\eqref{eq_matjac} is invertible and $F$ is smooth on $\mathcal{S}\times \R^+$, the Implicit Function Theorem insures that there exists $m_1$ and a neighborhood $\mathcal{V}_1$ of $(\zeta_A,0,0,\zeta_a)$ in $\mathcal{S}$ such that there is a unique point ${\tb{y}}_1(m)\in \mathcal{V}_1$ satisfying $F(\tb{y}_1(m),m)=0$ for all $m<m_1$. And $m\mapsto {\tb{y}}_1(m)$ is regular and converges to $(\zeta_A,0,0,\zeta_a)$ when $m$ converges to $0$. 
A simple computation ensures that $F(\tb{y}_1(0),m)=F((\zeta_A,0,0,\zeta_a),m)=0$, for any $m>0$. Since ${\tb{y}}_1(m)$ is unique, we deduce that ${\tb{y}}_1(m)={\tb{y}}_1(0)$. \\
Moreover, from Theorem 6.1 and Section 6.3 of \cite{ruelle1989book} (see also Appendice B of \cite{collet2011rigorous}, or \cite{hoppensteadt1966singular}), we conclude that if $m_1$ and $\mathcal{V}_1$ are small enough, any solution $\varphi_{m,\tb{z}^0}$ with ${\tb{z}}^0 \in \mathcal{V}_1$ and $m<m_1$ converges uniformly to $\varphi_{0,\tb{z}^0}$ when $m$ converges to $0$. In other words, ${\tb{y}}_1(0)$ attracts all the orbits $\varphi_{m,\tb{z}^0}$ starting from $\mathcal{V}_1$.\\
Similarly, we find $(m_i)_{i=2,3,4}$ and $(\mathcal{V}_i)_{i=2,3,4}$ neighborhoods around the three other equilibria of~\eqref{eqbre_attractive}, denoted by $(\tb{y}_i(0))_{i={2,3,4}}$, such that, for $i\in \{2,3,4\}$, for all $m<m_i$, $\tb{y}_i(0)$ attracts all solutions $\varphi_{m,\tb{z}^0}$ with ${\tb{z}}^0 \in \mathcal{V}_i$ and $m<m_i$. \\
Theorem 6.1 and Section 6.3 of \cite{ruelle1989book} ensure also the stability of the local stable and unstable manifolds of a hyperbolic non-attractive fixed points. Thus, we find $m_5,..,m_9$ and $\mathcal{V}_5,..,\mathcal{V}_9$, neighborhoods around equilibria~\eqref{eqbre_saddle3} and~\eqref{eqbre_saddle2} 
that satisfy the following properties. For all $i\in \{5,..,9\}$, for all $m<m_i$, there exists a unique fixed point $\mathbf{y}_i(m)\in\mathcal{V}_i$ invariant by $F(.,m)$ which repulses all orbits solution associated to $F(.,m)$, except the orbits that start from a surface of dimension $3$ or $2$, depending on whether we are focused on equilibria~\eqref{eqbre_saddle3} or~\eqref{eqbre_saddle2} respectively. These surfaces are the stable manifolds of $(\mathbf{y}_i(m))_{i=5,..,9}$ in $(\mathcal{V}_i)_{i=5,..,9}$ respectively. \\
Without loss of generality, we assume that these nine neighborhoods are disjoint sets.\\

The second step is to deal with trajectories outside these nine neighborhoods.
Let $\eps>0$ and for $i=1,..,9$, we define
$$
\mathcal{V}_i^\eps=B(\tb{y}_i(0), R_i)=\{{\bf{z}}\in \mathcal{S}, \|{\bf{z}}-{\bf{y}}_i(0)\|\leq R_i\}, \quad \text{where} \quad R_i=\max\{r>0, B(\tb{y}_i(0),r+\eps)\subset \mathcal{V}_i\},
$$
which is a neighborhood of $\tb{y}_i(0)$ slightly smaller than $\mathcal{V}_i$.\\
Recall that the five neighborhoods $(\mathcal{V}_i^\eps)_{i=5,..,9}$ attracts some solutions $\varphi_{0,\mathbf{z}^0}$. Thus, we set
\begin{equation}
\label{def_Wset}
\mathcal{W}=   \left( \underset{i=5,..,9}{\bigcup} \; \underset{\mathbf{z}^0\in \mathcal{V}_i^\eps}{\bigcup} \; (\varphi_{0,\mathbf{z}^0})^{-1}([0,+\infty)) \, \right) \bigcap \mathcal{S},
\end{equation}
which is a neighborhood of the union of all stable manifolds of unstable equilibria~\eqref{eqbre_saddle2} and~\eqref{eqbre_saddle3} assuming $m=0$. We denote the complement of $\mathcal{W}$ in $\mathcal{S}$ by $\mathcal{W}^c$.

Let us first deal with the trajectories starting from $\mathcal{W}^c$.
According to Appendix~\ref{subsec_sansp}, all trajectories $\varphi_{0,\tb{z}^0}$ starting from $\mathcal{W}^c$ converge to a stable equilibrium, i.e. they reach any neighborhood of set $\{\tb{y}_i(0),i=1,.,4\}$ in finite time.
Since $\mathcal{W}^c$ is compact, there exists a finite time $t_1>0$ such that $\varphi_{0,\tb{z}^0}(t_1) \in \cup_{i=1}^4 \mathcal{V}_i^\eps$, for all $\tb{z}^0\in \mathcal{W}^c$.
Moreover, from Theorem 1.4.7 by \cite{berger1992geometrie}, the flow $\varphi$ is uniformly continuous with respect to $m$, to ${\tb{z}}^0$ and to $t$. We can thus find $m_{10}<\min_{i=1,..,9}m_i$ such that for every $m\leq m_{10}$, $\tb{z}^0\in \mathcal{W}^c$
\begin{equation*}
 \label{eq_proche}
\left\vert \varphi_{0,\tb{z}^0}(t_1)-\varphi_{m,\tb{z}^0}(t_1) \right\vert \leq \eps.
\end{equation*}
Then, by definition of $(\mathcal{V}_i)_{i=1,..,4}$ and $(\mathcal{V}^\eps_i)_{i=1,..,4}$, we deduce that for all $m\leq m_{10}$, all $\tb{z}^0\in\mathcal{W}^c$ and all $t\geq t_1$,
$$
\varphi_{m,\tb{z}^0}(t) \in \bigcup_{i=1}^4 \mathcal{V}_i.
$$

Then we deal with the trajectories starting from $\mathcal{W}$.
According to the definition of $\mathcal{W}$~\eqref{def_Wset}, all trajectories $\varphi_{0,\tb{z}^0}$ starting from $\mathcal{W}$ reach one of the five neighborhoods $(\mathcal{V}_i^\eps)_{i=5,..,9}$ in finite time. Thus, by reasoning as above, we can find $m_{11} \leq m_{10}$ and $t_2 >0$ such that for all $m\leq m_{11}$ and all $\tb{z}^0\in \mathcal{W}$, there exists $t\leq t_2$, with
$$
\varphi_{m,\tb{z}^0}(t) \in \bigcup_{i=5}^9 \mathcal{V}_i.
$$
Let us fix $ m\leq m_{11}$, $\tb{z}^0\in \mathcal{W}$ and assume that $\varphi_{m,\tb{z}^0}(t_3) \in \mathcal{V}_i$. We have then three possibilities:
\begin{itemize}
\item[(i)] If $\varphi_{m,\tb{z}^0}(t) \in \mathcal{V}_i$ for all $t\geq t_3$, then $\tb{z}^0$ belongs to the stable manifold of $\tb{y}_i(m)$ in $\mathcal{S}$. Since we have a global diffeomorphism on $\mathcal{S}$, we can find the stable manifold of $\tb{y}_i(m)$ by iterating the Implicit Function Theorem and deduce that this stable manifold is an empty interior set of dimension $3$ or $2$, depending on which equilibrium is considered.
\item[(ii)] Otherwise, there exists $t_4\geq t_3$ such that $\varphi_{m,\tb{z}^0}(t_4) \not\in \mathcal{V}_i$. If $\varphi_{m,\tb{z}^0}(t_4) \in \mathcal{W}^c$, the flow will converge to one of the four equilibria~\eqref{eqbre_attractive} according to previous reasoning.
\item[(iii)] The last possibility is $\varphi_{m,\tb{z}^0}(t_4) \in \mathcal{W}\setminus \cup_{i=5}^9 \mathcal{V}_i$. Thus, the flow $(\varphi_{m,\tb{z}^0}(t))_{t\geq t_4}$ will reach again one of the neighborhoods $(\mathcal{V}_j)_{j=5,..,9}$. It would have a problem if the trajectory went from a neighborhood to an other without living $\mathcal{W}$ as $t \mapsto +\infty$. Thus, let us show that this is not possible. Indeed, the flow goes out of $\mathcal{V}_i$ by following the unstable manifold of $\tb{y}_i(m)$ which is close to the unstable manifold of $\tb{y}_i(0)$ (according to the continuity of the unstable manifolds with respect to $m$, cf Theorem 6.1 by \cite{ruelle1989book}). Since $\varphi_{m,\tb{z}^0}$ leaves $\mathcal{V}_i$ by staying in $\mathcal{W}$, the intersection of the unstable manifold of $\tb{y}_i(0)$ and $\mathcal{W}$ is not empty. From the definition of $\mathcal{W}$~\eqref{def_Wset} and Appendix~\ref{subsec_sansp}, it is possible if and only if $\tb{y}_i(0)=\tb{y}_9(0)=(\chi_A,\chi_a,\chi_A,\chi_a)$ and if $\varphi_{m,\tb{z}^0}$ leaves $\mathcal{V}_9$ through the neighborhood of the stable manifold of one of the equilibria~\eqref{eqbre_saddle3}. Thus, the flow $(\varphi_{m,\tb{z}^0}(t))_{t\geq t_4}$ will reach one of the neighborhood $(\mathcal{V}_j)_{j \in \{5,6,7,8\}}$. Then, only the two previous possibilities (i) or (ii) are possible. 
\end{itemize}
Finally, we have shown that any solution $\varphi_{m,\tb{z}^0}$ of~\eqref{eq_systF} starting from $\mathcal{S}$ and with $m\leq m_{11}$ converges to one of the equilibria~\eqref{eqbre_attractive}, except if it starts from a set with empty interior which is the union of the global stable manifolds of the equilibria $(\tb{y}_i(m))_{i=5,..,9}$.\\

Finally, $m_0:=m_{11}$, 
\begin{equation*}
\mathcal{D}^{A,a}_m=\underset{{\tb{z}}^0\in \mathcal{V}_1}{\cup} \; {\varphi_{m,{\tb{z}}^0}}^{-1}([0,+\infty)),
\end{equation*}
and $\mathcal{D}_m^{A,A}$, $\mathcal{D}_m^{a,A}$, $\mathcal{D}_m^{a,a}$ are defined in a similar way using sets $\mathcal{V}_2$, $\mathcal{V}_3$ and $\mathcal{V}_4$ respectively.
We have shown that for all $m\leq m_0$, the four non empty interior sets $(\mathcal{D}^{\alpha,\alpha'}_m)_{\alpha,\alpha'=A,a}$ satisfy Theorem~\ref{lemma_convsetasym}.
\end{proof}

\appendix

\section{Dynamical system without migration}
\label{subsec_sansp}

In this appendix, we will prove the results of Section~\ref{subsec_withoutmig}, which is related to the case without migration. To this aim, we use the two following weighted quantities
\begin{equation*}
 \label{def_diffpopasym}
\Omega(t):=(\beta_A-1)z_A(t)-(\beta_a-1)z_a(t),
\end{equation*}
\begin{equation*}
 \label{def_sumpopasym}
\Sigma(t):=(\beta_A-1)z_A(t)+(\beta_a-1)z_a(t).
\end{equation*}
From~\eqref{eq_system1patch}, we find that
\begin{eqnarray}
 \label{eq_diffpopasym} &\frac{d}{dt}\Omega(t)=\Omega \left[ b\dfrac{\beta_Az_A+\beta_a z_a}{z_A+z_a}-d-c(z_A+z_a) \right],\\
\label{eq_sumpopasym} &\frac{d}{dt}\Sigma(t)=\Sigma \left[ b\dfrac{\beta_Az_A+\beta_a z_a}{z_A+z_a}-d-c(z_A+z_a) \right] -2b(\beta_A-1)(\beta_a-1)\dfrac{z_az_A}{z_A+z_a}.
\end{eqnarray}

\begin{proof}[Proof of Lemma~\ref{lemma_sanspasym}]
We start by studying the stability of equilibrium $(0,0)$. 
 Assume that $\Sigma(0)>0$. From~\eqref{eq_sumpopasym}, we derive
\begin{equation*}
 \frac{d}{dt}\Sigma \geq \Sigma\left[ b-d +b\left( \frac{\Sigma}{z_A+z_a}-2(\beta_A-1)(\beta_a-1)\frac{z_az_A}{(z_A+z_a)\Sigma} \right)-c(z_A+z_a)\right].
\end{equation*}
Since $\Sigma^2-2(\beta_A-1)(\beta_a-1)z_az_A \geq 0$ and $-(\beta_{\min}-1)(z_A+z_a) \geq -\Sigma$, we deduce that
\begin{equation*}
 \frac{d}{dt}\Sigma \geq \Sigma\left[ b-d -c\frac{\Sigma}{(\beta_{\min}-1)}\right].
\end{equation*}
Hence, as long as $\Sigma\in]0,(b-d)(\beta_{\min}-1)/c[$, $\Sigma(t)$ is increasing. Thus $(0,0)$ is an unstable equilibrium.

The stability of the three other equilibria, $(\zeta_A,0)$, $(0,\zeta_a)$ and $(\chi_A,\chi_a)$, can be deduce by a direct computation of Jacobian matrices at these points, which we do not detail.

Finally, let us study the long time behavior of any solution. Equation~\eqref{eq_diffpopasym} implies that the sign of $\Omega(t)$ is equal at all time and, that $\mathcal{D}^A_0$
is a positively invariant set under dynamical system~\eqref{eq_system1patch}. Moreover, there exists only a stable equilibrium that belongs to the set $\mathcal{D}^A_0$, which is $(\zeta_A,0)$.

We consider the function $W: \mathcal{D}^A_0 \to \R$:
\begin{equation}
 \label{eq_lyapasym}
W(z_A,z_a):=\ln\left( \dfrac{\Sigma}{\Omega} \right)=\ln\left( \dfrac{(\beta_A-1)z_A+(\beta_a-1)z_a}{(\beta_A-1)z_A-(\beta_a-1)z_a}\right)  \geq  0  .
\end{equation}
From~\eqref{eq_diffpopasym} and~\eqref{eq_sumpopasym}, we deduce that
\begin{equation*}
 \label{eq_derivationV}
\dfrac{dW(z_A(t),z_a(t))}{dt}= -2b(\beta_A-1)(\beta_a-1)\dfrac{z_az_A}{(z_A+z_a)\Sigma}  \leq 0 .
\end{equation*}
Moreover for any $(z_A,z_a)\in \mathcal{D}^A_0$, $W(z_A,z_a)=0$ if and only if $z_a=0$. $W(z_A,z_a)$ converges to $+\infty$ when $(\beta_A-1)z_A-(\beta_a-1)z_a$ converges to $0$ and $\frac{dW}{dt}$ is non-positive on $\mathcal{D}^A_0$ and is equal to zero if and only if $z_a=0$.
It ensures that $W$ is a Lyapunov function for~\eqref{eq_system1patch} on the set $\mathcal{D}^A_0$ which cancels only on $\mathcal{D}^A_0 \cap \{z_a=0\}$.
Furthermore, a simple computation gives that the largest invariant set in $\mathcal{D}^A_0 \cap \{z_a=0\}$ is $\{(\zeta_A,0)\}$.
 Theorem 1 of~\cite{lasalle1960some} is thus sufficient to conclude that any solution of~\eqref{eq_system1patch} with initial condition in $\mathcal{D}^A_0$ converges to $(\zeta_A,0)$ when $t$ tends to $+\infty$.
Similarly, we prove that any solution with initial condition in
$
 \mathcal{D}^a_0 
$
converges to $(0,\zeta_a)$.

Finally, assume that $\Omega(0)=0$. Then, $\Omega(t)=0$ for all $t\geq 0$ according to~\eqref{eq_diffpopasym} and, in addition with~\eqref{eq_system1patch}, we derive for all $\alpha\in \CA$,
\begin{equation*}
 \dfrac{d}{dt}z_\alpha=z_\alpha \left[ b\frac{\beta_A\beta_a-1}{\beta_A+\beta_a-2}-d-c\frac{\beta_A+\beta_a-2}{\beta_{\bar\alpha}-1}z_{\alpha} \right].
\end{equation*}
We deduce the last point of Lemma~\ref{lemma_sanspasym} easily.
\end{proof}

\section{Extinction time}
\label{subsec_probaasym}
This subsection is devoted to the proof of Theorem~\ref{theo_asym} following ideas similar to the ones of the proof of Theorem 3 and Proposition 4.1 in \cite{coron2016stochastic}. Hence, we do not give all details, but explain only parts that are different. 

Assume that $m_A\leq m_0$, $m_a\leq m_0$ and that ${\bf{Z}}^K(0)$ converges in probability to a deterministic vector ${\bf{z}^0}$ belonging to $\mathcal{D}^{A,a}_{m_A,m_a}$, Lemma~\ref{lemma_convsetasym} and Theorem~\ref{theo_sysasym} ensure that $({\bf{Z}}^K(t),t\geq 0)$ reaches a neighborhood of the equilibrium $(\zeta_A,0,0,\zeta_a)$ after a finite time independent from $K$. Indeed, the process dynamics is close to the one of the limiting deterministic system~\eqref{systdetasym}. 

To prove Theorem~\ref{theo_asym}, it remains to estimate the time before all $a$-individuals in patch $1$ and all $A$-individuals in patch $2$ disappear. We denote it by
\begin{equation}
 T^K_0=\inf\{t\geq 0, Z^K_{a,1}(t)+Z^K_{A,2}(t)=0\},
\end{equation}
and we assume that the process is initially close to equilibrium $(\zeta_A,0,0,\zeta_a)$. The estimation is deduced from the following Lemma.
\begin{lemma}
 \label{lem_stoasym}
There exist two positive constants $\varepsilon_0$ and $C_0$ such that for any $\varepsilon\leq \varepsilon_0$,
 if there exists $\eta\in ]0,1/2[$ that satisfies $\max(|z_{A,1}^0-\zeta_A|,|z_{a,2}^0-\zeta_a|) \leq \eps$ and
 $\eta\eps/2 \leq z_{a,1}^0,z_{A,2}^0 \leq \eps/2$, then 
 $$
\begin{aligned}
 &\text{for all } C>(\omega(A,a))^{-1}+C_0\eps, &\P(T_0^K\leq C \log(K)) \underset{K\to +\infty}{\to} 1,\\
 &\text{for all } 0\leq C <(\omega(A,a))^{-1}-C_0\eps, & \P(T_0^K\leq C \log(K)) \underset{K\to +\infty}{\to} 0.
\end{aligned}
$$
\end{lemma}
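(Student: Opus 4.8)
The plan is to reduce the statement to an extinction-time estimate for a small subpopulation near the equilibrium $(\zeta_A,0,0,\zeta_a)$, and then to couple the two \emph{minority} coordinates $(N^K_{a,1},N^K_{A,2})$ with a two-type subcritical branching process whose decay rate is exactly $\omega(A,a)$. By Theorem~\ref{theo_sysasym} (equivalently Lemma~\ref{lemma_convsetasym}) together with the convergence Lemma~\ref{lemma_1}, starting from $\mathcal{D}^{A,a}_{m_A,m_a}$ the rescaled process enters a neighborhood of $(\zeta_A,0,0,\zeta_a)$ after a finite deterministic time, so it suffices to work under the stated initial condition. I would then show that the \emph{majority} coordinates $N^K_{A,1}$ and $N^K_{a,2}$ stay in $[(\zeta_A-\delta)K,(\zeta_A+\delta)K]$ and $[(\zeta_a-\delta)K,(\zeta_a+\delta)K]$, while the minority coordinates stay below $M\eps K$, for all times of order $\log K$ (indeed up to $e^{VK}$), with probability tending to $1$. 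This confinement follows from comparison with logistic birth-death processes and exit-time/large-deviation estimates as in Proposition~4.1 of~\cite{coron2016stochastic}.

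On the confinement event, the per-capita rates of the minority coordinates are, up to an $O(\eps+\delta)$ error, those of a linear two-type process. Writing $n_1=N^K_{a,1}$, $n_2=N^K_{A,2}$, the birth rate of type $1$ is $\approx b\,n_1$, its death (natural plus competition against the resident density $\zeta_A$) is $\approx(d+c\zeta_A)n_1=b\beta_A n_1$, and migration out of patch $1$ removes mass at rate $\approx m_a n_1$; crucially, the migration of resident $A$-individuals from patch $1$, modulated by the $a$-fraction, creates type-$2$ individuals at rate $\approx m_A n_1$, and symmetrically resident $a$-individuals migrating from patch $2$ feed type $1$ at rate $\approx m_a n_2$. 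Hence the mean matrix of the limiting branching process is
\begin{equation*}
M=\begin{pmatrix} -\bigl(b(\beta_A-1)+m_a\bigr) & m_a \\ m_A & -\bigl(b(\beta_a-1)+m_A\bigr)\end{pmatrix}.
\end{equation*}
A direct computation gives $\mathrm{tr}(M)=-\bigl(b(\beta_A+\beta_a-2)+m_A+m_a\bigr)$ and $\mathrm{tr}(M)^2-4\det(M)=\bigl(b(\beta_A-\beta_a)+(m_a-m_A)\bigr)^2+4m_Am_a$, so the dominant (Perron) eigenvalue of $M$ is exactly $-\omega(A,a)$, which is negative under the standing assumptions; the process is subcritical, and irreducible whenever $m_A,m_a>0$.

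I would then sandwich $(n_1,n_2)$ between two genuine two-type subcritical branching processes $Z^{\pm}$, obtained by replacing the fluctuating resident densities by $\zeta\pm\delta$ and bounding the nonlinear birth fractions; by continuity of the Perron eigenvalue their decay rates are $\omega(A,a)\pm C_0'\eps$. Since the minority starts from $n_1(0)+n_2(0)$ of order $\eps K$, the standard fact that a subcritical multitype branching process started from $\approx K$ individuals goes extinct at time $(\log K)/\rho+O_{\P}(1)$, where $\rho$ is its decay rate, yields both bounds: domination from above by $Z^{+}$ gives $T^K_0\le(\omega(A,a)-C_0\eps)^{-1}\log K$ with high probability (upper bound), and domination from below by $Z^{-}$ gives the matching lower bound. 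The hypothesis $\eta>0$, i.e. $z^0_{a,1},z^0_{A,2}\ge\eta\eps/2$, guarantees an initial minority mass of order $\eps K$ in each coordinate, which is exactly what prevents an anomalously fast extinction and turns the survival of $Z^{-}$ up to $(\omega(A,a)+C_0\eps)^{-1}\log K$ into a high-probability event; it is the lemma's translation of $(z^0_{a,1},z^0_{A,2})\neq(0,0)$ in Theorem~\ref{theo_asym}.

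The main obstacle I expect is making the coupling uniform over the required time window and over the coupled two-type structure simultaneously: one must propagate the confinement of the resident coordinates while the minority, fed back through the migration cross-terms $m_A$ and $m_a$, is kept small, and then transfer the $O(\eps+\delta)$ eigenvalue perturbation into the $C_0\eps$ error on the time constant without losing the concentration of the branching-process extinction time. Controlling the genuinely two-dimensional (irreducible) branching approximation, rather than a single-type one, is the feature that departs from the symmetric case of~\cite{coron2016stochastic} and requires the Perron--Frobenius projection onto the dominant eigenvector of $M$ to track the surviving mass.
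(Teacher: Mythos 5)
Your proposal is correct and follows essentially the same route as the paper: confinement of the resident coordinates near $\zeta_A,\zeta_a$, coupling of the two minority coordinates with a two-type subcritical branching process whose rates (birth $b$, death $b\beta_{\bar\alpha}+m_\alpha$, cross-type creation $m_{\bar\alpha}$) and Perron eigenvalue $-\omega(A,a)$ you identify exactly as in the paper's generator $G$ (yours is its transpose, which is immaterial). The only cosmetic difference is the last step, where the paper invokes the explicit extinction-probability formula of Heinzmann (Theorem 3.1) while you appeal to the standard $\log K/\rho$ concentration of multitype subcritical extinction times.
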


\begin{proof}
 Following the first step of Proposition 4.1's proof given by \cite{coron2016stochastic}, we prove that as long as the population processes $Z_{a,1}^K(t)$ and $Z_{A,2}^K(t)$
 have small values, the processes $Z_{A,1}^K(t)$ and $Z_{a,2}^K(t)$ stay close to $\zeta_A$ and $\zeta_a$ respectively.

 Then, by bounding death rates, birth rates and migration rates of $(Z^K_{a,1}(t),t\geq 0)$ and $(Z^K_{A,2}(t),t\geq 0)$, we are able to compare the dynamics of these two processes with the ones of 
$$\left(\frac{\mathcal{N}_a(t)}{K},\frac{\mathcal{N}_A(t)}{K}, t\geq 0 \right),$$
 where $(\mathcal{N}_a(t),\mathcal{N}_A(t)) \in \N^{\{a,A\}}$ is a two-types branching process with types $a$ and $A$ and for which 
\begin{itemize}
 \item any $\alpha$-individual gives birth to a $\alpha$-individual at rate $b$,
 \item any $\alpha$-individual gives birth to a $\bar\alpha$-individual at rate $m_{\bar\alpha}$,
 \item any $\alpha$-individual dies at rate $b\beta_{\bar \alpha}+m_\alpha$.
\end{itemize}
The goal is thus to estimate the extinction time of such a sub-critical two types branching process. 
Let $M(t)$ be the mean matrix of the multitype process, that is,
$$
M(t)=
\begin{pmatrix}
 \E\Big[\E\Big[\mathcal{N}_a(t)\Big|(\mathcal{N}_a(0),\mathcal{N}_A(0))=(1,0)\Big]\Big] & & \E\Big[\E\Big[\mathcal{N}_A(t)\Big|(\mathcal{N}_a(0),\mathcal{N}_A(0))=(1,0)\Big]\Big]\\
& & \\
\E\Big[\E\Big[\mathcal{N}_a(t)\Big|(\mathcal{N}_a(0),\mathcal{N}_A(0))=(0,1)\Big]\Big] & & \E\Big[\E\Big[\mathcal{N}_A(t)\Big|(\mathcal{N}_a(0),\mathcal{N}_A(0))=(0,1)\Big]\Big]
\end{pmatrix},
$$
and let $G$ be the infinitesimal generator of the semigroup $\{M(t),t\geq 0\}$. From the book of~\cite{athreya1972branching} p.202, we deduce a formula of $G$ which is
$$
G=\begin{pmatrix}
   -b(\beta_A-1)-m_a & m_A \\
m_a & -b(\beta_a-1)-m_A
  \end{pmatrix}.
$$
Applying Theorem 3.1 of~\cite{heinzmann2009extinction}, we find that
\begin{equation}
\label{eq_probabitype}
\P\Big((\mathcal{N}_a(t),\mathcal{N}_A(t))=(0,0) \Big|(\mathcal{N}_a(0),\mathcal{N}_A(0))=(z_{a,1}^0 K,z_{A,2}^0 K)\Big)= (1-c_a e^{rt})^{z_{a,1}^0 K}(1-c_A e^{rt})^{z_{A,2}^0 K},
\end{equation}
where $c_a, c_A$ are two positive constants and $r$ is the largest eigenvalue of the matrix $G$. With a simple computation, we find that $r=-\omega(A,a)$. From~\eqref{eq_probabitype}, we deduce that the extinction time is of order $\omega(A,a)^{-1}\log K$ when $K$ tends to $+\infty$ by arguing as in step 2 of Proposition 4.1's proof of \cite{coron2016stochastic}. This concludes the proof of Lemma~\ref{lem_stoasym}.
\end{proof}

Finally, this gives all elements to induce Theorem~\ref{theo_asym}.\\

\textbf{Acknowledgements}:
The author would like to thank Pierre Collet for his help on the theory of dynamical systems.
This work  was partially funded by the Chair "Mod\'elisation Math\'ematique et Biodiversit\'e" of VEOLIA-Ecole Polytechnique-MNHN-F.X.\\

\bibliographystyle{plain}      
\bibliography{biblio_speciation}

\end{document}